\documentclass[11pt]{amsart}

\usepackage[T1]{fontenc}
\usepackage[utf8]{inputenc}
\usepackage{lmodern}
\usepackage[margin=1in]{geometry}
\usepackage{amsmath,amssymb,amsthm,mathtools}
\usepackage{thm-restate}
\usepackage{microtype}
\usepackage{float}
\usepackage{braket}
\usepackage{booktabs}
\usepackage[dvipsnames]{xcolor}
\usepackage[colorlinks, linkcolor=OliveGreen, citecolor=purple,urlcolor=purple, bookmarks=true]{hyperref}
\usepackage{algorithm} 
\usepackage[noend]{algpseudocode} % must be loaded before cleveref
\algnewcommand\algorithmicinput{\textbf{Input:}}
\algnewcommand\Input{\item[\algorithmicinput]}
\algnewcommand\algorithmicoutput{\textbf{Output:}}
\algnewcommand\Output{\item[\algorithmicoutput]}
\usepackage[capitalise,nameinlink]{cleveref}

\newcommand{\C}{\mathbb{C}}
\newcommand{\R}{\mathbb{R}}
\newcommand{\Trunc}{\mathsf{T}}
\newcommand{\init}{\mathrm{in}}
\newcommand{\vin}{v_{\init}}
\newcommand{\est}{\mathrm{est}}
\newcommand{\prep}{\mathrm{prep}}
\newcommand{\poly}{\mathrm{poly}}
\DeclareMathOperator{\supp}{supp}
\DeclarePairedDelimiter{\abs}{\lvert}{\rvert}
\DeclarePairedDelimiter{\norm}{\lVert}{\rVert}
\DeclarePairedDelimiterX{\ip}[2]{\langle}{\rangle}{#1,#2}

\theoremstyle{plain}
\newtheorem{theorem}{Theorem}[section]
\newtheorem{lemma}[theorem]{Lemma}
\theoremstyle{definition}
\newtheorem{definition}[theorem]{Definition}
\theoremstyle{remark}
\newtheorem{remark}[theorem]{Remark}

\floatstyle{ruled}
\restylefloat{algorithm}
\floatname{algorithm}{Algorithm}
\crefname{algorithm}{Algorithm}{Algorithms}
\Crefname{algorithm}{Algorithm}{Algorithms}

\numberwithin{equation}{section}
\hypersetup{
  pdftitle={Ground state computations by the truncated power method},
  pdfauthor={Savero Lukianto Chandra, Daochen Wang, Carolyn Zhang}
}

\makeatletter
\DeclareRobustCommand{\affiliationmark}[1]{%
  \hyperlink{tpm-affiliation-#1}{%
    \textsuperscript{\normalfont\@fnsymbol{#1}}}}
\newcommand{\affiliationnote}[2]{%
  \par\noindent\hangindent1.8em\hangafter1
  \makebox[1.8em][r]{%
    \hypertarget{tpm-affiliation-#1}{%
      \textsuperscript{\normalfont\@fnsymbol{#1}}}}%
  #2\par}
\renewcommand{\@setauthors}{%
  \begingroup
  \trivlist
  \centering\normalfont\footnotesize
  \@topsep16.5\p@\relax
  \advance\@topsep by -\baselineskip
  \item\relax
  \spaceskip=1.5\fontdimen2\font
  \MakeUppercase{\authors}
  \endtrivlist
  \endgroup}
\renewcommand{\@setthanks}{%
  \parindent\z@
  \def\thanks##1{##1}%
  \thankses}
\let\tpm@adminfootnotes\@adminfootnotes
\renewcommand{\@adminfootnotes}{%
  \long\def\@makefntext##1{\noindent##1}%
  \tpm@adminfootnotes}
\renewcommand{\footnoterule}{%
  \kern-3\p@\hrule\@width.5in\kern2.6\p@}
\makeatother
\title[Ground state computations by TPM]{%
  Ground state energy estimation and preparation\\
  by the truncated power method}

\author[S. L. Chandra, D. Wang, and C. Zhang]{%
  Savero Lukianto Chandra%
  \affiliationmark{1}\textsuperscript{,}%
  \affiliationmark{2}\textsuperscript{,}%
  \affiliationmark{3}\qquad
  Daochen Wang\affiliationmark{1}\qquad
  Carolyn Zhang%
  \affiliationmark{2}\textsuperscript{,}%
  \affiliationmark{3}}

\thanks{%
  \affiliationnote{1}{%
    Department of Computer Science,
    University of British Columbia, Vancouver, BC V6T 1Z4, Canada.}%
  \affiliationnote{2}{%
    Department of Physics and Astronomy,
    University of British Columbia, Vancouver, BC V6T 1Z1, Canada.}%
  \affiliationnote{3}{%
    Quantum Matter Institute,
    University of British Columbia,
    Vancouver, BC V6T 1Z4, Canada.}%
  \space
  Contact:
  S. L. Chandra
  (\href{mailto:saverolc@gmail.com}{\nolinkurl{saverolc@gmail.com}});
  D. Wang
  (\href{mailto:wdaochen@gmail.com}{\nolinkurl{wdaochen@gmail.com}});
  C. Zhang
  (\href{mailto:cczhang@phas.ubc.ca}{\nolinkurl{cczhang@phas.ubc.ca}}).%
}
\date{}

\begin{document}

\begin{abstract}
The Guided Local Hamiltonian Problem (GLHP) is the problem of ground state energy estimation and preparation given a guiding state having good overlap with the true ground state. The problem is known to be BQP-complete for small error but is classically tractable for larger error. In this work, we identify an orthogonal regime in which the problem is classically tractable, namely when the true ground state $\psi$ has polynomially bounded $\ell_1$-norm. In particular, we show that the Truncated Power Method (TPM), a classical algorithm introduced by Kirby et al.~\cite{accuracy_kirby_2026} in the quantum context, has a runtime that is polynomially comparable to quantum algorithms for GLHP in all relevant parameters, except for an extra factor of $\norm{\psi}_1^2$. In other words, if $\norm{\psi}_1^2$ is polynomially bounded, then TPM dequantizes quantum algorithms for GLHP.
\end{abstract}

\maketitle

%%%%%%%%%%%%%%%%%%%%%%
\section{Introduction}
\label{sec:intro}
%%%%%%%%%%%%%%%%%%%%%%

Ground state energy estimation and preparation are two of the most promising applications for quantum computation. Given an initial guiding state with good overlap with the true ground state, these tasks can be performed efficiently by quantum phase estimation (QPE) \cite{measurements_kitaev_1995}. This version of the problem, where a guiding state is available, has been the focus of many recent works from both the quantum algorithms and complexity communities \cite{preparation_lin_2020,heisenberg_lin_2022,dequantizing_gharibian_2023,hardness_cade_2023,guide_waite_2025,estimation_mande_2026,estimation_somma_2026,estimation_jeffery_2026,guiding_jeffery_2026,preparation_chen_2026}.
 
On the complexity side, Gharibian and Le Gall \cite{dequantizing_gharibian_2023} gave the first formal definition of the problem, which they called the Guided Local Hamiltonian Problem (GLHP). When the guiding state has constant overlap with the ground state, \cite{dequantizing_gharibian_2023} showed that the energy estimation version of GLHP is BQP-complete for small error but is classically tractable for larger error. In other words, when the error in the estimate is allowed to be sufficiently large, GLHP can be dequantized. Their result is striking because it delineates the boundary between the power of quantum and classical computing in terms of a single real parameter, namely the error.

In this work, we identify an orthogonal regime in which GLHP is classically tractable, namely when the true ground state $\psi$ has small $\ell_1$-norm in the computational basis, even if the desired error is small. More specifically, we show that a classical algorithm, the truncated power method (TPM), is polynomially comparable to quantum algorithms for GLHP in all relevant parameters (including the error) \emph{except} for an extra factor of $\norm{\psi}_1^2$. TPM was originally proposed by machine learning researchers Yuan and Zhang \cite{truncation_yuan_2013}, and introduced in the quantum context by Kirby et al.~\cite{accuracy_kirby_2026}.

For fair comparison between quantum and classical algorithms, we assume the input $n$-qubit Hamiltonian $H$ is given classically as a real linear combination of $M$ Pauli terms, i.e., $H=\sum_{j=1}^M \alpha_j P_j$. This representation is standard in condensed matter and quantum chemistry.  We also assume that the initial guiding state $v_{\init}$ is supported on $\poly(n)$ computational basis states. This allows us to simplify the comparison between quantum and classical costs and is without loss of generality in the sense that GLHP is BQP-hard even in this case \cite[Theorem 1.2]{dequantizing_gharibian_2023}.

In GLHP, we are given $v_{\init}$ with overlap $|\ip{v_{\init}}{\psi}|^2 \geq \delta$, where $\psi$ is the true ground state. We will assume the ground space is nondegenerate in the introduction for simplicity. In a standard estimation version of GLHP \cite{dequantizing_gharibian_2023}, the goal is to obtain an estimate of the ground state energy $E_0$ within $\varepsilon$ additive error. In a standard preparation version of GLHP \cite{preparation_lin_2020}, the goal is to prepare a state $v$ with $|\ip{v}{\psi}|^2 \geq 1 - \eta$, and we further assume that $E_1-E_0\geq \Delta$ where $E_0$ is the ground state energy and $E_1$ the first excited state energy (so $\Delta$ lower bounds the spectral gap). The parameter $\alpha \coloneqq \sum_{j=1}^M \abs{\alpha_j}$ upper bounds the norm of the Hamiltonian. The parameters $M$ and $\alpha$ should be thought of as scaling with $\poly(n)$.\footnote{More formally, we assume their polynomial scaling with $n$ compares to that of the support size $S$ of $v_{\init}$ in such a way that the $S$-dependent terms in \cref{tab:complexity-comparison} are subleading. This is only for convenience in presentation.} Under this setup, we compare in \cref{tab:complexity-comparison-intro} the performance of the classical TPM with the near-optimal quantum algorithm in \cite{preparation_lin_2020}. The results in the table do not technically need to assume $H$ is local; each Pauli term $P_j$ can act nontrivially on all $n$ qubits.

\begin{table}[htbp]
  \centering
  \setlength{\tabcolsep}{14pt}
  \setlength{\aboverulesep}{8pt}
  \setlength{\belowrulesep}{8pt}
  \begin{tabular}{lcc}
    \toprule
    Ground state task & Quantum algorithm (\cite{preparation_lin_2020}) & TPM (this work) \\
    \midrule
    Energy estimation
      & $\dfrac{M\alpha}{\sqrt{\delta}\,\varepsilon}$ 
      & $\norm{\psi}_1^2\cdot \dfrac{M\alpha^3}{\delta\,\varepsilon^3}$\\
    \addlinespace[8pt]
    Preparation
      & $\dfrac{M\alpha}{\sqrt{\delta}\,\Delta} \log \dfrac{1}{\eta}$ 
      & $\norm{\psi}_1^2 \cdot   \dfrac{M\alpha^3}{\Delta^3}
         \max\Bigl(\dfrac{1}{\delta}, \dfrac{1}{\eta}\Bigr)$\\
    \bottomrule
  \end{tabular}
  \vspace{8pt}
  \caption{Runtimes of TPM and the quantum algorithm of \cite{preparation_lin_2020} for estimating the ground state energy to additive error $\varepsilon$ and for preparing the ground state to fidelity at least $1-\eta$. Bounds hide constant and polylogarithmic factors. TPM is deterministic; the quantum algorithm succeeds with constant probability.}
  \label{tab:complexity-comparison-intro}
\end{table}

We make three remarks on the comparison in \cref{tab:complexity-comparison-intro}.
\begin{enumerate}
    \item We use Lin and Tong's algorithm \cite{preparation_lin_2020} as our main benchmark since their algorithm is near-optimal, by which we mean optimal up to logarithmic factors.\footnote{Recent works have removed even these logarithmic factors \cite{estimation_jeffery_2026,guiding_jeffery_2026,preparation_chen_2026}.} In particular, for the estimation task, Somma and de Wolf \cite{estimation_somma_2026} proved a lower bound of $\Omega(1/(\sqrt{\delta} \varepsilon))$ in terms of queries to $U\coloneqq \exp(iH)$ and $U^{-1}$, and a single use of $U$ or $U^{-1}$ costs $\widetilde O(M\alpha)$ gates using Hamiltonian simulation for unit time, see, e.g., \cite[circuit constructions and gate counts]{simulation_berry_2015}.
    \item Ref.~\cite{preparation_lin_2020} assumes a different access model to ours, namely a block-encoding of $H$. To make the comparison, we assume the block-encoding is obtained from the classical description of $H$ using the linear combination of unitaries technique \cite{lcu_childs_2012,simulation_berry_2015,algorithms_dalzell_2023}.
    \item For the preparation task, the complexities are technically incomparable. More specifically, \cite{preparation_lin_2020} outputs $v$ as a quantum state stored in qubits whereas TPM outputs $v$ as an explicit vector that can be written down on a piece of paper. Converting the quantum state to an explicit vector requires tomography. In \cref{sec:tomography},
    we show that this can be done by running the preparation algorithm $\widetilde{O}(\norm{\psi}_1^2/\eta^{3/2})$ times using results of \cite{tomography_vanapeldoorn_2022}, which we also show is optimal in the query model. Therefore, accounting for tomography makes the preparation runtimes between the quantum algorithm and TPM polynomially related.
\end{enumerate}

We now compare our results for the estimation task to those in \cite{accuracy_kirby_2026,dequantizing_gharibian_2023}. First, \cite[Theorem 1]{accuracy_kirby_2026} shows that the TPM algorithm has runtime
\begin{equation}\label{eq:l0-norm-bound}
    \widetilde O\Bigl(\norm{\psi}_0 \cdot 
\frac{M\alpha^3}{\Delta^3}\max\Bigl(\frac{1}{\delta},\frac{\alpha}{\varepsilon}\Bigr)\Bigr).
\end{equation}
Our energy estimation runtime in \cref{tab:complexity-comparison-intro} does not involve $\Delta$ dependence.\footnote{This may seem counterintuitive but the following argument suggests it is possible. If $\varepsilon\leq\Delta$, then the runtime implicitly depends on $\Delta$. On the other hand, if $\varepsilon > \Delta$, then the algorithm does not need to resolve the ground and first excited states to obtain an $\epsilon$-accurate ground state energy estimate, so it does not need to depend on $\Delta$.} However, we can get a $\Delta$-dependent runtime by preparing a state sufficiently close to the ground state (with sufficiently small $\eta$) to ensure an energy estimation error $\varepsilon$. Using the preparation runtime in \cref{tab:complexity-comparison-intro}, this gives an energy error of at most $2\norm{H}\eta$, which is upper bounded by $2\alpha \eta$.\footnote{More concretely, let $\ket{v} = \sqrt{1-p}\ket{\psi} + \sqrt{p}\ket{\phi}$ where $\ip{\psi}{\phi} = 0$ and $p\leq \eta$ to satisfy $|\ip{v}{\psi}|^2\geq 1-\eta$. Then evaluating the energy of $\ket{v}$ gives $\langle v|H|v\rangle - E_0=p(\langle \phi|H|\phi\rangle-E_0)$
Since both energies lie in $[-\|H\|,\|H\|]$, we get the loose upper bound $0\leq \langle v|H|v\rangle-E_0\leq 2\|H\|p\leq 2\|H\|\eta$.} Therefore, to obtain additive error $\varepsilon$, we can set $\eta$ as $\eta = \varepsilon/(2\alpha)$, which gives a runtime of
\begin{equation}\label{eq:l1-norm-bound}
    \widetilde O\Bigl(\norm{\psi}_1^2 \cdot 
\frac{M\alpha^3}{\Delta^3}\max\Bigl(\frac{1}{\delta},\frac{\alpha}{\varepsilon}\Bigr)\Bigr).
\end{equation}
Since $\norm{\psi}_1^2 \leq \norm{\psi}_0$, our bound in \cref{eq:l1-norm-bound} is always at least as good as \cref{eq:l0-norm-bound}, but it could be much better in cases when $\norm{\psi}_1^2$ is much smaller than $\norm{\psi}_0$. Indeed, it is trivial to turn a state $\psi$ with small $\norm{\psi}_1^2$ into a state with maximal $\norm{\psi}_0$ but still small $\norm{\psi}_1^2$ by adding a negligible but nonzero amount of amplitude on all basis states. We remark that \cref{eq:l1-norm-bound} is not comparable to the runtime stated for the estimation task in \cref{tab:complexity-comparison-intro}; the overall optimal runtime would be the smaller of (\ref{eq:l1-norm-bound}) or the $\Delta$-independent value in \cref{tab:complexity-comparison-intro}. In \cref{tab:complexity-comparison-intro} we use the $\Delta$-independent bound to clearly make the comparison to \cite{preparation_lin_2020}.

Second, \cite[Theorem 1.1]{dequantizing_gharibian_2023} yields a classical algorithm with runtime 
\begin{equation}
    \widetilde O (\alpha \varepsilon^{-1}\delta^{-2}(M2^k)^{\alpha \varepsilon^{-1}\log(1/\delta)}),
\end{equation}
where $k$ is the locality of the Hamiltonian. Therefore, $\varepsilon/\alpha$ has to be constant for the algorithm to be efficient, assuming constant $\delta$.\footnote{ Ref.~\cite{dequantizing_gharibian_2023} assumes a normalized $H$, which in our notation is $H/\alpha$. Their constant $\varepsilon$ translates to constant $\varepsilon/\alpha$ here.} In contrast, the complexities in   \cref{tab:complexity-comparison-intro} remain efficient for $\varepsilon/\alpha = 1/\poly(n)$.

\subsection{Technical overview.}
\label{sec:technical_overview}

TPM solves the ground state preparation and energy estimation problems by power iteration using the shifted Hamiltonian. For a Hamiltonian $H$ as above, the shifted Hamiltonian is $A = \alpha I - H$. Since $\alpha \geq \norm{H}$, the principal eigenvector of $A$ is the ground state $\psi$ of $H$. Therefore, applying $A$ to a vector amplifies its component in the direction of $\psi$.

TPM repeatedly applies $A$ to the initial state $v_{\init}$ to progressively approach $\psi$. Compared to textbook power iteration (see, e.g., \cite[Lecture 25]{algebra_trefethen_2022}), the \emph{only} change is that after each application of $A$, TPM truncates the vector to the $k$ basis states with the largest coefficients in absolute value. This truncation is crucial because without it, the number of parameters describing the evolving vector could grow exponentially, making the algorithm infeasible. After truncation, TPM normalizes the vector to preserve its unit norm, as also happens in textbook power iteration. This vector is then passed into the next iteration and the algorithm runs for a total of $L$ iterations.

The main challenge is showing that the iterates converge reasonably fast to $\psi$ when $k$ is of order $\norm{\psi}_1^2$. With this convergence rate in hand, the setting of $L$ naturally follows. That it suffices to set $k$ to be of order $\norm{\psi}_1^2$ is not a priori obvious. For both the estimation and preparation bounds, the key to obtaining a rigorous bound on $k$ is defining and tracking suitable progress measures.

Let us start with the preparation bound, as it is somewhat simpler conceptually (even though it is technically more demanding than the estimation bound). Following Kirby et al.~\cite{accuracy_kirby_2026}, we use the progress measure $T(v)$ to be the tangent of the angle between $\psi$ and a vector $v$. As the tangent of an angle, $T(v)$ does not depend on the normalization of $v$, which makes it convenient to work with. In particular, it does not change at the final normalization step of each iteration. When $T(v)$ is close to zero, $v$ is close to $\psi$. Therefore, it suffices to show that $T(v)$ shrinks considerably at each iteration.
When a vector $v$ is multiplied by $A$, we have $T(Av) \leq (1-\Delta/(2\alpha)) T(v)$ by a standard convergence analysis of power iteration \cite{accuracy_kirby_2026}. When $v$ is truncated to give $\Trunc_k(v)$, we show that
\begin{equation}\label{eq:truncation_effect}
    T(\Trunc_k(v)) \leq \frac{T(v) + \sqrt{2} \rho}{1-\rho T(v) - \rho^2},
\end{equation}
where $\rho \coloneqq \sqrt{\norm{\psi}_1^2/k}$. To parse this, note that when $k$ is very large, $\rho$ is approximately zero, and so the inequality becomes $T(\Trunc_k(v)) \leq T(v)$, which says that truncation did not hinder our progress towards $\psi$. In that case the analysis collapses to the usual analysis for power iteration. Of course, we cannot afford an arbitrarily large $k$ and the actual analysis requires more care. In particular, we show that the composed mapping $T(v) \mapsto T(\Trunc_k(Av))$ is contractive to a fixed point $\xi_\star$, that is of order $\rho \alpha/\Delta$, which is close to zero for sufficiently large $k$. The rate of contraction depends on the problem parameters, which in turn determines the setting of $L$. Working out the details gives the preparation bound in \cref{tab:complexity-comparison-intro}. Note that this analysis depends on $\Delta$ since the contraction rate depends on $\Delta$ even in the ideal case when no truncation occurs.

Since \cref{eq:truncation_effect} is crucial to why $k$ and our final bound scale with $\norm{\psi}_1^2$, let us sketch a proof here. Write $v = a \psi + b \psi^\perp$, where $\psi^\perp$ is a unit vector orthogonal to $\psi$. Write $\Pi$ for the projection onto the top $k$ largest coordinates of $v$. Then, we can control the effect of truncation on the inner product between $\psi$ and $v$:
\begin{equation}
    |\ip{\psi}{\Trunc_k(v)}| = |\ip{\psi}{\Pi v}| \geq |\ip{\psi}{v}| - |\ip{\psi}{(I-\Pi)v}| \geq \abs{a} - \norm{\psi}_1\norm{(I-\Pi)v}_{\infty},
\end{equation}
where $\tau \coloneqq \norm{(I-\Pi)v}_{\infty}$ is the largest absolute value of an entry of $(I-\Pi)v$. But by the definition of $\Pi$, every nonzero entry of $\Pi v$ has magnitude at least $\tau$, and so
\begin{equation}
    \tau k \leq \norm{\Pi v}_1 \leq \abs{a} \norm{\Pi\psi}_1 + \abs{b} \norm{\Pi \psi^\perp}_1 \leq \abs{a} \norm{\psi}_1 + \abs{b} \sqrt{k},
\end{equation}
where the last inequality is because $\Pi\psi^\perp$ has at most $k$ nonzero entries. Combining the previous two equations gives
\begin{equation}\label{eq:cosine}
     |\ip{\psi}{\Trunc_k(v)}| \geq \abs{a} - \abs{a}\rho^2 - \abs{b}\rho = \abs{a}(1-\rho T(v) - \rho^2).
\end{equation}
Some elementary algebra then converts this cosine-type bound to the tangent bound in \cref{eq:truncation_effect}.

For the energy estimation bound, we need a different analysis to bypass the dependency on $\Delta$. We now track two progress measures: 1. $R(v) \coloneqq \ip{v}{Av}/\ip{v}{v}$, the Rayleigh quotient of a vector $v$; and 2. $C(v)$, the cosine of the angle between $\psi$ and $v$.\footnote{Of course, cosine and tangent are interconvertible. The choice of cosine here makes the algebra much nicer.} Note that $R(v)$ is at most $\norm{A}$ and the closer $R(v)$ is to $\norm{A}$, the better the energy estimate corresponding to $v$ is. When $v$ is multiplied by $A$, we show that
\begin{equation}
    C(Av) \geq \sqrt{\frac{\norm{A}}{R(v)}} C(v).
\end{equation}
The intuition here is that if $R(v)$ is small, then $v$ has more amplitude on eigenvectors of $A$ with smaller eigenvalues. Therefore, when $v$ is multiplied by $A$, the relative amplification of $v$'s component in the direction of $\psi$ is larger. This explains why the amplification factor, i.e., $\sqrt{\norm{A}/R(v)}$, is larger for smaller $R(v)$. Following multiplication by $A$, TPM performs truncation. The effect of truncation can be controlled using a bound like \cref{eq:cosine} with $k$ sufficiently large ($\rho$ sufficiently small). Since we have already covered this, let us continue as though no truncation occurs. 

Now, as long as $R(v)$ is $\varepsilon$-\emph{far} from $\norm{A}$, then the amplification factor is at least $\sqrt{\norm{A}/(\norm{A} - \varepsilon)} = (1-\varepsilon/\norm{A})^{-1/2} \geq (1 - \varepsilon/(2\alpha))^{-1/2}$, where the last inequality uses $\norm{A} \leq 2\alpha$. Therefore, after a certain number of iterations $L$, the value of $C(v)$ would exceed $1$. But that is impossible as $C(v)$ represents a cosine. Therefore, it must be the case that prior to iteration $L$, the value of $R(v)$ is $\varepsilon$-\emph{close} to $\norm{A}$. Working out the details gives the estimation bound in \cref{tab:complexity-comparison-intro}. Note that this analysis used $\varepsilon$ to control the progress and is independent of $\Delta$.

%%%%%%%%%%%%%%%%%%%%%%%%%%%%%%%%%%%%
\section{Algorithm and main results}
\label{sec:algorithm}
%%%%%%%%%%%%%%%%%%%%%%%%%%%%%%%%%%%%

\subsection{Notation and setup}
\label{sec:notation}

Norms and supports refer to the computational basis; logarithms are
natural. For $x,y\in\C^N$, write $\ip{x}{y}=x^\dagger y$,
$\norm{x}_1=\sum_{j=1}^N\abs{x_j}$, and
$\norm{x}_2=(\sum_{j=1}^N\abs{x_j}^2)^{1/2}$. We use the operator norm $\norm{C}$ and write $C\preceq D$
when $D-C$ is positive semidefinite. We sometimes put vector symbols into bras and kets to stress that the vector should be interpreted as a quantum state.

Let $H\in\C^{N\times N}$ be an $n$-qubit Hamiltonian, where $N=2^n$, with
Pauli decomposition
\begin{equation}\label{eq:pauli}
  H=\sum_{j=1}^M h_jP_j,\qquad h_j\in\R.
\end{equation}
Each $P_j$ is a tensor product of matrices in $\{I,X,Y,Z\}$. Let $E_0$ be the ground state energy, or equivalently, the smallest eigenvalue of $H$, and $\Omega \coloneqq\ker(H-E_0I)$ be the ground space of $H$. We define $P_\Omega$ as the orthogonal projection onto $\Omega$. Since TPM involves vector truncation, we define the sparsity parameter
\begin{equation}\label{eq:ground-space-parameters}
  \chi_\Omega\coloneqq
  \max_{u\in\Omega\colon \norm{u}_2=1} \, \norm{u}_1^2,
\end{equation}
that captures the compressibility of the states in the ground space, which will be useful to quantify the effect of truncation. The maximum exists because we consider unit vectors in the finite-dimensional space $\Omega$. When $H$ has a unique ground state, $\Omega$ is spanned by a normalized vector $\psi$ so that $\chi_\Omega=\norm{\psi}_1^2$.

In power iteration, applying a matrix to a vector gives the greatest relative amplification to the component along the eigenvector associated with the eigenvalue of largest magnitude. To establish a maximum amplification of the eigenvector associated with $E_0$, we shift the Hamiltonian so that the shifted eigenvalue corresponding to $E_0$ is guaranteed to have the highest magnitude. We define the following variables.
\begin{equation}\label{eq:shift}
  \alpha\coloneqq\sum_{j=1}^M\abs{h_j},\qquad
  A\coloneqq\alpha I-H,\qquad
  \lambda\coloneqq\alpha-E_0.
\end{equation}
Since $\norm{P_j}=1$, we have $\norm{H}\leq\alpha$. Consequently,
\begin{equation}\label{eq:spectral-bounds}
  0\preceq A\preceq 2\alpha I,\qquad
  AP_\Omega=\lambda P_\Omega,\qquad
  \lambda=\norm{A}\leq 2\alpha.
\end{equation}
Thus, an estimate $\mu$ of $\lambda$ gives the ground state energy estimate $\widehat E_0=\alpha-\mu$.

Another parameter involved in the theoretical analysis is the spectral gap of the shifted Hamiltonian, defined as the ratio between the difference of the two largest distinct eigenvalues and the largest eigenvalue:
\begin{equation}\label{eq:preparation-gap}
  \gamma\coloneqq\frac{E_+-E_0}{\lambda} \leq 1,
\end{equation}
where $E_+$ denotes the smallest eigenvalue of $H$ that is strictly larger than $E_0$.

All bounds in the main theorems (\cref{thm:estimation} and \cref{thm:preparation}) below use the coefficient sum $\alpha$ in \cref{eq:shift}, which is available directly from the input. We remark that a larger shift brings the eigenvalue ratios of $A$ closer to one and can slow power iteration. Thus, in practice, one may prefer a tighter upper bound on the largest eigenvalue of $H$.

\subsection{Algorithm}
\label{sec:subsec_algorithm}

In this subsection, we present the Truncated Power Method (TPM) algorithm for finding the minimum energy of $H$ by estimating the principal eigenvalue of the shifted Hamiltonian $A$ as in \cref{eq:shift}. The inputs of the algorithm are the shifted Hamiltonian $A$, an initial guess $v_{\init}$, the truncation threshold $k$, and the number of power iterations $L$. In order to make progress, it is standard to assume that the initial ground space overlap is positive:
\begin{equation}\label{eq:initial-overlap}
  \delta_\Omega\coloneqq\norm{P_\Omega\vin}_2^2>0.
\end{equation}

The TPM pseudocode is shown in \Cref{alg:tpm}. For a positive integer $k$, $\Trunc_k(x)$ is defined to be the vector truncation operator that retains the $k$ basis states of largest absolute value in $x\in\C^N$. For simplicity, we suppose the truncation resolves ties by a fixed coordinate ordering. The algorithm runs for $L$ iterations, and outputs a real number $\mu$ and a vector $v_L$ which respectively approximate the actual principal eigenvalue of $A$ and the eigenvector associated with it.

\begin{algorithm}[htbp]
\caption{Truncated power method (TPM)}\label{alg:tpm}
\begin{algorithmic}[1]
\Input $A\in\C^{N\times N}$ with $A\succeq0$;
$\vin\in\C^N$ with $\norm{\vin}_2=1$; positive integers $k,L$.
\Output A pair $(\mu,v_L)$ with $\norm{v_L}_2=1$ and
$\mu=\ip{v_L}{Av_L}$ estimating $\norm{A}$.
\State $v_0\gets\vin$
\For{$t=0,1,\ldots,L-1$}
    \State $y_t\gets Av_t$
    \State $z_t\gets\Trunc_k(y_t)$
    \If{$z_t=0$}
        \State $v_L\gets v_t$
        \State \Return $(0,v_L)$
    \EndIf
    \State $v_{t+1}\gets z_t/\norm{z_t}_2$
\EndFor
\State $\mu\gets\ip{v_L}{Av_L}$
\State \Return $(\mu,v_L)$
\end{algorithmic}
\end{algorithm}

\subsection{Main results}
\label{sec:results} 

This subsection presents two theorems on the correctness of \cref{alg:tpm}. \cref{thm:estimation} concerns the eigenvalue estimation, which specifies the truncation threshold $k_\est$ and minimum number of iterations $L_\est$ sufficient to approximate the principal eigenvalue to absolute error $\varepsilon$. Meanwhile, \cref{thm:preparation} covers the ground state preparation, specifying the critical resources for estimating the true ground state to fidelity $1-\eta$ for some $\eta>0$.

\begin{theorem}[Ground state energy estimation]\label{thm:estimation}
Suppose $\delta_\Omega>0$. For an accuracy target $\varepsilon >0$, define
\begin{equation}\label{eq:parameters}
  k_\est\coloneqq
  \frac{64\chi_\Omega\alpha^2}{\delta_\Omega\varepsilon^2},
  \qquad
  L_\est\coloneqq
  1+\frac{4\alpha}{\varepsilon}\log\frac{2}{\sqrt{\delta_\Omega}}.
\end{equation}
For integers $k\geq k_\est$ and $L\geq L_\est$, the output
$(\mu,v_L)$ of \cref{alg:tpm} satisfies
\begin{equation}\label{eq:energy-guarantee}
  \alpha-E_0-\varepsilon\leq\mu\leq\alpha-E_0.
\end{equation}
Equivalently, $\widehat E_0\coloneqq\alpha-\mu$ satisfies
$E_0\leq\widehat E_0\leq E_0+\varepsilon$.
\end{theorem}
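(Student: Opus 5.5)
The upper bound $\mu\leq\lambda=\alpha-E_0$ is immediate from \cref{eq:spectral-bounds}: $v_L$ is a unit vector and $\mu=\ip{v_L}{Av_L}\leq\norm{A}=\lambda$. In the early-return branch $\mu=0\le\lambda$ trivially. We will show this branch cannot occur, because the ground-space overlap stays bounded below. The real content is the lower bound $\mu\geq\lambda-\varepsilon$. For it we track two progress measures along the iterates: the Rayleigh quotient $R(v)=\ip{v}{Av}/\ip{v}{v}$ and the ground-space cosine $C(v)\coloneqq\norm{P_\Omega v}_2/\norm{v}_2$, with $C(v_0)=\sqrt{\delta_\Omega}$.

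\textbf{Step 1 (multiplication by $A$).} For unit $v$, write $v=P_\Omega v+w$ with $w\perp\Omega$. Then $\norm{P_\Omega Av}_2=\lambda\norm{P_\Omega v}_2$, and
\[
\norm{Av}_2^2=\ip{v}{A^2v}\leq\lambda\ip{v}{Av}=\lambda R(v),
\]
using $0\preceq A\preceq\lambda I$. Hence
\[
C(Av)\geq\lambda C(v)/\sqrt{\lambda R(v)}=\sqrt{\lambda/R(v)}\,C(v).
\]

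\textbf{Step 2 (truncation).} Here we adapt the overview's cosine bound \cref{eq:cosine} to a possibly degenerate $\Omega$. Let $y=Av$, let $u=P_\Omega y/\norm{P_\Omega y}_2\in\Omega$, so $\norm{u}_1^2\leq\chi_\Omega$, and let $\Pi$ be the top-$k$ projector. Then
\[
\norm{P_\Omega\Pi y}_2\geq\abs{\ip{u}{\Pi y}}\geq\norm{P_\Omega y}_2-\norm{u}_1\tau,
\]
where $\tau=\norm{(I-\Pi)y}_\infty$. Also $\tau k\leq\norm{\Pi y}_1\leq\sqrt{k}\norm{y}_2$. This gives the cruder but sufficient bound
\[
\norm{P_\Omega\Pi y}_2\geq\norm{P_\Omega y}_2-\rho\norm{y}_2,\qquad \rho=\sqrt{\chi_\Omega/k}.
\]
Combined with $\norm{\Pi y}_2\le\norm{y}_2$, we get $C(\Trunc_k y)\geq C(y)-\rho$. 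With $k\geq k_\est$ we have $\rho\leq\sqrt{\delta_\Omega}\,\varepsilon/(8\alpha)$.

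\textbf{Step 3 (contradiction argument).} Suppose $R(v_t)<\lambda-\varepsilon$ for every $t=0,\dots,L-1$. Since $\lambda\leq2\alpha$, the amplification factor satisfies
\[
\sqrt{\lambda/R}\geq(1-\varepsilon/\lambda)^{-1/2}\geq 1+\varepsilon/(4\alpha)\eqqcolon q.
\]
Thus $c_{t+1}\geq q c_t-\rho$, where $c_t=C(v_t)$. By induction, as long as $c_t\geq\sqrt{\delta_\Omega}/2$, the drift satisfies $(q-1)c_t\geq\varepsilon\sqrt{\delta_\Omega}/(8\alpha)\geq\rho$. It follows that $c_{t+1}\geq c_t$, and more precisely that $c_t-\rho/(q-1)$ grows geometrically by $q$. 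The constant $64$ in $k_\est$ is chosen so that $\rho/(q-1)\leq\sqrt{\delta_\Omega}/2$. Hence
\[
c_t\geq \sqrt{\delta_\Omega}/2+q^t\sqrt{\delta_\Omega}/2 .
\]
(This also shows $z_t\neq0$, since $c_t>0$.) After $t\geq\log(2/\sqrt{\delta_\Omega})/\log q$ steps, roughly $(4\alpha/\varepsilon)\log(2/\sqrt{\delta_\Omega})$, we would have $c_t>1$, a contradiction. So some $v_t$ with $t<L$ has $R(v_t)\geq\lambda-\varepsilon$.

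\textbf{Step 4 (main obstacle: the guarantee is needed at the final iterate).} Step 3 only produces some good $v_t$, whereas the theorem asks about $v_L$ for every $L\geq L_\est$. I would address this by proving that $R$ is essentially monotone along the iteration. Multiplication by $A$ does not decrease the Rayleigh quotient: the inequality $\ip{Av}{A\,Av}\,\ip{v}{v}\geq\ip{v}{Av}\,\ip{v}{A^2v}$ holds for $A\succeq0$ by a moment/Hölder-type argument. Truncation, however, could lower $R$. The plan is to run the contradiction argument over a window ending at $L-1$: either $R(v_t)\geq\lambda-\varepsilon$ at the end, or the cosine keeps growing. Alternatively, one can show that once $c_t$ is large, $R(v_t)\geq\lambda c_t^2$. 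I expect handling this final-iterate issue, together with the constant bookkeeping, to be the hardest part.

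If the monotonicity route fails, I would instead use the fact that $c_t$ is nondecreasing once above $\sqrt{\delta_\Omega}/2$. Then I would bound
\[
\lambda-R(v_t)\leq\lambda(1-c_t^2)
\]
for all later $t$ by combining it with the excited-space contribution. This second route still needs care to yield $\varepsilon$ rather than a multiple of it.
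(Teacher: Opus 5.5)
Steps 1--3 are essentially sound. They match the paper's power-iteration and truncation lemmas and its repelling-fixed-point argument. However, there is a genuine gap exactly where you flag it in Step 4.

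\Cref{alg:tpm} outputs $\mu=\ip{v_L}{Av_L}$ for the \emph{final} iterate, and it outputs $\mu=0$ if some $z_t=0$. Step 3 only produces some $t<L$ with $R(v_t)\geq\lambda-\varepsilon$. Your exclusion of the early exit is proved only under the contradiction hypothesis. Once $R(v_t)$ is near $\lambda$, the amplification $\sqrt{\lambda/R(v_t)}$ is near $1$, truncation can lower $c_t$ by up to $\rho$ per step, and nothing in Steps 1--3 controls $c_t$ or $R(v_t)$ afterwards. In particular, the ``fact'' that $c_t$ is nondecreasing above $\sqrt{\delta_\Omega}/2$ holds only while $R(v_t)<\lambda-\varepsilon$.

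None of your repairs closes this:
\begin{itemize}
\item Analyzing multiplication and truncation separately cannot give monotonicity, since truncation by itself can lower $R$.
\item The window argument yields no contradiction when the final run of iterates with $R<\lambda-\varepsilon$ is short.
\item The fallback $\lambda-R(v_t)\leq\lambda(1-c_t^2)$ needs $c_t^2\geq1-\varepsilon/\lambda$. This is a gap-dependent event, not a constant-factor issue. Already for exact power iteration ($k\geq N$), an excited level at $\lambda-\varepsilon'$ with $\varepsilon'\ll\varepsilon$ makes $R$ good from the start, while $c_t$ approaches $1$ only on the time scale $\lambda/\varepsilon'$. So this route would reintroduce the $\Delta$-dependence the theorem avoids.
\end{itemize}

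The missing idea is the paper's \cref{lem:monotonicity}, which analyzes the \emph{composite} step on a $k$-sparse input. Let $v$ be a $k$-sparse unit vector with $R(v)>0$, and let $w\coloneqq\Trunc_k(Av)/\norm{\Trunc_k(Av)}_2$. Because $\abs{\supp v}\leq k$ and truncation keeps the $k$ largest coordinates of $Av$,
\[
\ip{v}{Av}\leq\norm{\Pi_{\supp v}Av}_2\leq\norm{\Trunc_k(Av)}_2=\ip{w}{Av}.
\]
Hence $R(w)-R(v)=2\operatorname{Re}\ip{w-v}{Av}+\ip{w-v}{A(w-v)}\geq0$ by $A\succeq0$.

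Since $v_t$ is $k$-sparse for $t\geq1$, this gives $R_1\leq R_2\leq\cdots\leq R_L$. Also $R_1\geq\lambda c_1^2>0$, because $c_1\geq\sqrt{\lambda/R_0}\,c_0-\rho\geq c_0-\rho>0$. Together these rule out the early exit.

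You then run your contradiction on $R_L$ itself. If $R_L<\lambda-\varepsilon$, then $R_t<\lambda-\varepsilon$ for every $1\leq t\leq L$, so the cosine grows from $t=1$ to $t=L$. The possibly non-sparse $v_0$ enters only through $c_1\geq c_0-\rho$, which is why $L_\est$ carries the $+1$.

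A few bookkeeping points remain:
\begin{itemize}
\item For the constants to fit $L_\est$, use $g=(1-\varepsilon/\lambda)^{-1/2}$, for which $\log g\geq\varepsilon/(2\lambda)\geq\varepsilon/(4\alpha)$. Your $q=1+\varepsilon/(4\alpha)$ has logarithm strictly below $\varepsilon/(4\alpha)$.
\item Dispose of the case $\varepsilon\geq\lambda$ separately; it is trivial since $0\leq\mu\leq\lambda$.
\item In your unrolled bound, $c_\star$ cannot be replaced by its upper bound $\sqrt{\delta_\Omega}/2$.
\end{itemize}
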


\begin{theorem}[Ground state preparation]\label{thm:preparation}
Suppose $\delta_\Omega>0$. For an infidelity target  $0<\eta<1$, define
\begin{equation}\label{eq:preparation-parameters}
  k_{\prep}\coloneqq
  \frac{\chi_\Omega}{\gamma^2}
  \max\Bigl\{\frac{16}{\delta_\Omega},\frac{25}{\eta}\Bigr\},
  \qquad
  L_{\prep}\coloneqq
  \frac{4}{\gamma}\log\frac{25}{\delta_\Omega\eta}.
\end{equation}
For integers $k\geq k_{\prep}$ and $L\geq L_{\prep}$,
the output $(\mu,v_L)$ of \cref{alg:tpm} satisfies
\begin{equation}\label{eq:preparation-guarantee}
  \norm{P_\Omega v_L}_2^2\geq1-\eta.
\end{equation}
Moreover, $\lambda(1-\eta)\leq\mu\leq\lambda$.
\end{theorem}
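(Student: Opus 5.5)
We track the tangent progress measure $T(v)\coloneqq\norm{(I-P_\Omega)v}_2/\norm{P_\Omega v}_2$. It is invariant under scaling, so the normalization step in \cref{alg:tpm} does not change it. Set $\tau_t\coloneqq T(v_t)$ and $\rho\coloneqq\sqrt{\chi_\Omega/k}$. The plan has three stages: a one-step bound for each half of an iteration, a contraction argument for the composed map, and a conversion of the final tangent to fidelity and to the bound on $\mu$.

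\emph{Multiplication step.} Decompose $v=P_\Omega v+(I-P_\Omega)v$. The matrix $A$ acts as $\lambda$ on $\Omega$. On $\Omega^\perp$ its eigenvalues lie in $[0,\lambda(1-\gamma)]$, because $0\preceq A$ and every eigenvalue of $H$ off $\Omega$ is at least $E_+$. Hence
\[
T(Av)\le(1-\gamma)\,T(v).
\]

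\emph{Truncation step.} Here I would redo the sketch of \cref{eq:truncation_effect} with $\psi$ replaced by the unit vector $u\coloneqq P_\Omega v/\norm{P_\Omega v}_2\in\Omega$, so that $\norm{u}_1^2\le\chi_\Omega$. Let $\Pi$ be the coordinate projection onto the retained support of $y=Av$, and write $y=au+bw$ with $w\perp\Omega$ a unit vector. The argument of \cref{eq:cosine} gives
\[
\ip{u}{\Pi y}\ge\abs{a}\bigl(1-\rho\,T(y)-\rho^2\bigr).
\]
Since $u\in\Omega$, this yields $\norm{P_\Omega\Pi y}_2\ge\abs{a}(1-\rho T-\rho^2)$. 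For the other component, note $\norm{(I-P_\Omega)\Pi y}_2\le\norm{\Pi y-\ip{u}{\Pi y}u}_2$ and $\norm{\Pi y}_2\le\norm{y}_2$. Some elementary algebra should then give
\[
T(\Trunc_k y)\le\frac{T(y)+\sqrt2\,\rho}{1-\rho\,T(y)-\rho^2}.
\]
I expect the $\sqrt2$ to come from bounding $\norm{\Pi y}_2^2-\ip{u}{\Pi y}^2$ using $\abs{a}^2(1-(1-\rho T-\rho^2)^2)\le 2\abs{a}^2(\rho T+\rho^2)$ together with $b^2$.

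\emph{Contraction.} Combining the two steps gives the recursion $\tau_{t+1}\le f(\tau_t)$ with
\[
f(x)=\frac{(1-\gamma)x+\sqrt2\,\rho}{1-\rho(1-\gamma)x-\rho^2}.
\]
The initial value is $\tau_0=\sqrt{(1-\delta_\Omega)/\delta_\Omega}\le\delta_\Omega^{-1/2}$. By the choice $k\ge k_\prep$ we have $\rho\le\gamma\sqrt{\delta_\Omega}/4$ and $\rho\le\gamma\sqrt\eta/5$. The first gives $\rho\tau_0\le\gamma/4$, which keeps the denominator at least $1-\gamma/2$ as long as $\tau_t\le\tau_0$. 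In that regime
\[
f(x)\le\Bigl(1-\frac{\gamma}{2}\Bigr)x+c\rho
\]
for a small constant $c$, because $(1-\gamma)/(1-\gamma/2-\rho^2)\le1-\gamma/2$ up to lower-order terms that I will absorb by using $\rho\le\gamma/4$. By induction the iterates stay below $\tau_0$ and satisfy
\[
\tau_t\le\Bigl(1-\frac{\gamma}{2}\Bigr)^t\tau_0+\frac{2c\rho}{\gamma}.
\]
Taking $L\ge(4/\gamma)\log(25/(\delta_\Omega\eta))$ makes the first term at most $\sqrt\eta/5$. The second bound on $\rho$ makes the second term at most about $2c\sqrt\eta/5$. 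Altogether $\tau_L^2\le\eta$, up to tuning the constants. Then
\[
\norm{P_\Omega v_L}_2^2=\frac{1}{1+\tau_L^2}\ge1-\eta.
\]
For the claim about $\mu$: the bound $\mu\le\lambda=\norm{A}$ is immediate. For the lower bound, $\mu\ge\lambda\norm{P_\Omega v_L}_2^2$ holds because $A\succeq0$ and $A$ preserves both $\Omega$ and $\Omega^\perp$.

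I also need to rule out the early return. If $z_t=0$ then $y_t=0$, which is impossible because $\norm{P_\Omega y_t}_2=\lambda\norm{P_\Omega v_t}_2>0$. Here $\lambda\ge\alpha-\norm{H}\ge0$; the degenerate case $\lambda=0$ forces $A=0$, and I would handle it separately.

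The main obstacle is the contraction stage. I must keep the denominator of $f$ bounded away from zero uniformly, which requires the induction hypothesis $\tau_t\le\tau_0$, and I must tune the constants so that exactly the stated $16/\delta_\Omega$, $25/\eta$ and $25/(\delta_\Omega\eta)$ suffice. If the linearization is too lossy, I would instead locate the fixed point $\xi_\star$ of $f$ explicitly and show $f(x)-\xi_\star\le(1-\gamma/2)(x-\xi_\star)$ on $[\xi_\star,\tau_0]$.
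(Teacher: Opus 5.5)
Your multiplication step, your truncation step, the early-exit check, and the final conversions $\norm{P_\Omega v_L}_2^2=1/(1+\tau_L^2)$ and $\mu\ge\lambda\norm{P_\Omega v_L}_2^2$ are sound. They match the paper's \cref{lem:subspace-power,lem:subspace-truncation} and the end of its proof; the $\sqrt2$ does arise as you predict, from $\norm{\Pi y}_2^2-\abs{\ip{u}{\Pi y}}^2\le b^2+2a^2(\rho T+\rho^2)$.

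The gap is in the contraction stage. The induction hypothesis $\tau_t\le\tau_0$ is false in general, and your uniform denominator bound rests on it. The affine majorant $g(x)=(1-\gamma/2)x+c\rho$ maps $[0,\tau_0]$ into itself only if $\tau_0\ge 2c\rho/\gamma$, which nothing guarantees. For instance, suppose $\Omega$ is spanned by a single $\psi$ that is not $k$-sparse, and $\vin=\psi$. Then $\tau_0=0$, but $\Trunc_k(A\vin)=\lambda\Trunc_k(\psi)\notin\Omega$, so $\tau_1>0$. The repair is to use the invariant interval the paper uses, $[0,\delta_\Omega^{-1/2}]$; the only thing you ever need from $\tau_0$ is $\tau_0\le\delta_\Omega^{-1/2}$.

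The constants also need more care than ``denominator at least $1-\gamma/2$''. That bound only yields $c=\sqrt2/(1-\gamma/2)$, which can be $2\sqrt2$. The floor $2c\rho/\gamma$ is then only bounded by $(4\sqrt2/5)\sqrt\eta$, which exceeds $\sqrt\eta$, so the stated $k_{\prep}$ would not suffice.

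Here is the sharper bound. On $[0,\delta_\Omega^{-1/2}]$ you have $\rho(1-\gamma)x+\rho^2\le\gamma(1-\gamma)/4+\gamma^2/16$. So the denominator is at least $D_\gamma\coloneqq 1-\gamma/4+3\gamma^2/16\ge 11/12$. Moreover $(1-\gamma/2)D_\gamma-(1-\gamma)=\gamma/4+5\gamma^2/16-3\gamma^3/32\ge0$. Hence $f\le g$ there with $c=12\sqrt2/11$. Then $g(\delta_\Omega^{-1/2})\le\delta_\Omega^{-1/2}$, because $c\rho\le(3\sqrt2/11)\gamma\sqrt{\delta_\Omega}\le\gamma/(2\sqrt{\delta_\Omega})$. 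So the interval is invariant and the truncation lemma applies at every step. Unrolling gives $\tau_L\le e^{-\gamma L/2}\delta_\Omega^{-1/2}+2c\rho/\gamma\le\eta^2/625+(24\sqrt2/55)\sqrt\eta\le\sqrt\eta$ for $k\ge k_{\prep}$ and $L\ge L_{\prep}$.

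Once repaired, your route is a legitimate alternative to the paper's. The paper's \cref{lem:preparation-recurrence} analyzes the exact map $f$. It shows $f$ is nondecreasing, maps $[0,\delta_\Omega^{-1/2}]$ into itself, has a unique fixed point $\xi_\star<2\rho/\gamma$, and is $e^{-\gamma/8}$-Lipschitz by a derivative bound; it then compares $x_t\le\xi_t$ using monotonicity. Your affine majorant needs no fixed-point or comparison argument. It also gives a faster contraction factor, $1-\gamma/2$ instead of $e^{-\gamma/8}$. The price is a somewhat larger additive floor, about $3.1\rho/\gamma$ instead of $2\rho/\gamma$, which the stated $k_{\prep}$ still absorbs.
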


The proofs of \cref{thm:estimation} and \cref{thm:preparation} appear in \cref{sec:proof} and \cref{sec:preparation} respectively. We deliberately design the \cref{thm:estimation} proof to be spectral-gap-independent and \cref{thm:preparation} to be the opposite to comply with the dequantization task in \cref{sec:dequantization}.

\subsection{Sparse arithmetic cost}
\label{sec:cost}

We present a time complexity analysis of \cref{alg:tpm} and show that it is polynomial in the input parameters. As we have seen in \cref{sec:notation}, we assume that $A$ is given in the Pauli decomposition form with $M+1$ terms, where the extra $+1$ comes from the shift, and $\vin$ is supplied as a list of its $S\coloneqq\abs{\supp(\vin)}$ nonzero amplitudes and basis labels. A key point here is that each Pauli string simply permutes computational basis states up to phases, so we count the application of a Pauli string to a computational basis state as a unit-cost operation. A na\"ive representation of Pauli strings as
exponentially-sized matrices would induce exponentially expensive operation cost for the matrix-vector product. We also count amplitude arithmetic and comparisons between two basis vectors as unit-cost operations.

Applying $A$ to a vector supported on $s$ computational basis states generates a vector supported on at most $(M+1)s$ basis states. The entire process of adding up the coefficients for the same basis states, sorting based on magnitude for truncation, discarding low-weight basis states, and normalizing the result costs $\widetilde O(Ms)$ operations. We estimate that the first iteration costs $\widetilde O(MS)$; each subsequent
iteration and the final Rayleigh quotient cost
$\widetilde O(Mk)$ because the state is capped to be $k$-sparse. Since there are a total of $L$ power iterations, \cref{alg:tpm} returns $(\mu,v_L)$ in
\begin{equation}\label{eq:general-runtime}
  \widetilde O\bigl(MS+Mk L\bigr)
\end{equation}
sparse arithmetic operations.

%%%%%%%%%%%%%%%%%%%%%%%%%%%%%%%%%%%%%%%%
\section{Ground state energy estimation}
\label{sec:proof}
%%%%%%%%%%%%%%%%%%%%%%%%%%%%%%%%%%%%%%%%

In this section, we prove \cref{thm:estimation} by tracking the following progress measures. The quantity $C_\Omega(v)$ can be thought of as the ``cosine of the angle between $v$ and $\Omega$,'' while $R(v)$ is the Rayleigh quotient of $A$ with respect to $v$. 

\begin{definition}[Progress measures]\label{def:estimation-progress}
For a nonzero vector $v\in\C^N$, define
\begin{equation}\label{eq:estimation-progress}
  C_\Omega(v)\coloneqq\frac{\norm{P_\Omega v}_2}{\norm{v}_2},\qquad
  R(v)\coloneqq\frac{\ip{v}{Av}}{\ip{v}{v}}.
\end{equation}
\end{definition}
Both quantities are invariant under multiplication by a nonzero scalar. Thus, normalization after truncation does not affect either quantity and makes that step easy to analyze. We first establish bounds for the power iteration and truncation steps, and prove the monotonicity of the Rayleigh quotient in \cref{sec:aux-est}. Then, we combine these lemmas to prove \cref{thm:estimation} in \cref{sec:proof-est}.

Throughout this section, let $k$ be a positive integer and set
\begin{equation}\label{eq:estimation-rho}
  \rho\coloneqq\sqrt{\frac{\chi_\Omega}{k}},
\end{equation}
where $\chi_\Omega$ is defined in \cref{eq:ground-space-parameters}.

\subsection{Auxiliary estimates}
\label{sec:aux-est}

The first two lemmas play the same roles as \cref{lem:subspace-power,lem:subspace-truncation}: the power step increases the overlap, while truncation reduces it by at most $\rho$. The amplification in the power step depends on the current Rayleigh quotient rather than the spectral gap.

\begin{lemma}[Power iteration]\label{lem:estimation-power}
For every nonzero vector $v$ with $R(v)>0$, we have
\begin{equation}\label{eq:estimation-power}
  C_\Omega(Av)\geq\sqrt{\frac{\lambda}{R(v)}}\,C_\Omega(v).
\end{equation}
\end{lemma}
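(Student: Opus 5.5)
We need $C_\Omega(Av)^2 = \norm{P_\Omega Av}_2^2/\norm{Av}_2^2$. The plan is to compute the numerator exactly and bound the denominator by Cauchy--Schwarz, so that the Rayleigh quotient appears naturally.

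For the numerator, $A$ commutes with $P_\Omega$ and $AP_\Omega=\lambda P_\Omega$ by \cref{eq:spectral-bounds}. Hence $\norm{P_\Omega Av}_2=\lambda\norm{P_\Omega v}_2$.

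For the denominator, we use $0\preceq A\preceq\lambda I$ (since $\lambda=\norm{A}$), which gives $A^2\preceq\lambda A$. Therefore
\begin{equation*}
\norm{Av}_2^2=\ip{v}{A^2v}\leq\lambda\ip{v}{Av}=\lambda R(v)\norm{v}_2^2 .
\end{equation*}
To justify $A^2\preceq\lambda A$, diagonalize $A$: each eigenvalue $a\in[0,\lambda]$ satisfies $a^2\leq\lambda a$. The hypothesis $R(v)>0$ ensures $Av\neq0$, since $\ip{v}{Av}>0$. This makes $C_\Omega(Av)$ well defined.

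Combining the two steps,
\begin{equation*}
C_\Omega(Av)^2\geq\frac{\lambda^2\norm{P_\Omega v}_2^2}{\lambda R(v)\norm{v}_2^2}=\frac{\lambda}{R(v)}\,C_\Omega(v)^2 .
\end{equation*}
Taking square roots gives \cref{eq:estimation-power}. The only step needing care is the operator inequality $A^2\preceq\lambda A$, and it follows directly from the spectral theorem; there is no real obstacle.
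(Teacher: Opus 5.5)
Your proof is correct and is essentially the paper's argument: you use the exact identity $\norm{P_\Omega Av}_2=\lambda\norm{P_\Omega v}_2$ from $AP_\Omega=\lambda P_\Omega$, the bound $\norm{Av}_2^2\leq\lambda R(v)\norm{v}_2^2$ from $A^2\preceq\lambda A$, and $R(v)>0$ to guarantee $Av\neq0$. The step your plan calls ``Cauchy--Schwarz'' is really this operator inequality, as your actual computation shows.
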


\begin{proof}
Since $A$ commutes with $P_\Omega$ and acts as multiplication by $\lambda$ on $\Omega$,
\begin{equation}
  \norm{P_\Omega Av}_2=\lambda\norm{P_\Omega v}_2.
\end{equation}
Also, $0\preceq A\preceq\lambda I$ implies $A^2\preceq\lambda A$, so
\begin{equation}
  \norm{Av}_2^2=\ip{v}{A^2v}
  \leq\lambda\ip{v}{Av}
  =\lambda R(v)\norm{v}_2^2.
\end{equation}
The assumption $R(v)>0$ ensures that $Av\ne0$. Dividing the first identity by the square root of the second bound gives \cref{eq:estimation-power}.
\end{proof}

\begin{lemma}[Truncation]\label{lem:estimation-truncation}
For every vector $y$ with $\Trunc_k(y)\ne0$, we have
\begin{equation}\label{eq:estimation-truncation}
  C_\Omega(\Trunc_k(y))\geq C_\Omega(y)-\rho.
\end{equation}
\end{lemma}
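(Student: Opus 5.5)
By scale invariance of $C_\Omega$ and of $\Trunc_k$ (truncation commutes with positive scaling, and $C_\Omega$ is homogeneous of degree zero), we normalize $\norm{y}_2=1$. Write $\Pi$ for the coordinate projection onto the $k$ retained entries, so $\Trunc_k(y)=\Pi y$. Let $\tau\coloneqq\norm{(I-\Pi)y}_\infty$. The plan is to lower bound $\norm{P_\Omega \Pi y}_2$ and upper bound $\norm{\Pi y}_2\le 1$, which gives $C_\Omega(\Pi y)\ge \norm{P_\Omega\Pi y}_2$. If $P_\Omega y=0$ the claim is trivial since $C_\Omega\ge 0\ge -\rho$, so we assume $P_\Omega y\neq 0$. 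Set $u\coloneqq P_\Omega y/\norm{P_\Omega y}_2\in\Omega$, a unit vector with $\norm{u}_1^2\le\chi_\Omega$.

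First, since $\norm{P_\Omega \Pi y}_2\ge \abs{\ip{u}{\Pi y}}$, we write $\ip{u}{\Pi y}=\ip{u}{y}-\ip{u}{(I-\Pi)y}$. Here $\ip{u}{y}=\norm{P_\Omega y}_2=C_\Omega(y)$, and Hölder gives
\begin{equation}
  \abs{\ip{u}{(I-\Pi)y}}\le\norm{u}_1\tau\le\sqrt{\chi_\Omega}\,\tau .
\end{equation}
Second, we bound $\tau$. Every retained coordinate has magnitude at least $\tau$. Note that if fewer than $k$ coordinates of $y$ are nonzero, then $\tau=0$ and we are done. Otherwise,
\begin{equation}
  k\tau^2\le\norm{\Pi y}_2^2\le\norm{y}_2^2=1,
\end{equation}
so $\tau\le 1/\sqrt k$. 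Combining the two steps yields $\norm{P_\Omega\Pi y}_2\ge C_\Omega(y)-\sqrt{\chi_\Omega/k}=C_\Omega(y)-\rho$. Dividing by $\norm{\Pi y}_2\le1$ preserves the inequality when the right side is nonnegative, and the inequality is trivial otherwise.

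This $\ell_2$ bound on $\tau$ is simpler than the $\ell_1$ argument of the overview, and it directly gives the clean loss $\rho$ without the $\rho^2$ term. The main point to check is that the constant is exactly $\rho$, i.e.\ that replacing $\norm{P_\Omega y}$ by the overlap with the single vector $u$ loses nothing. This holds because $\ip{u}{y}=\norm{P_\Omega y}_2$ exactly.
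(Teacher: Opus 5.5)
Your proof is correct and is essentially the paper's argument: the same $\ell_2$-averaging bound on the discarded magnitudes ($k\tau^2\le\norm{\Pi y}_2^2$), H\"older against a ground-space vector with $\norm{u}_1\le\sqrt{\chi_\Omega}$, the triangle inequality, and $\norm{\Trunc_k(y)}_2\le\norm{y}_2$. The differences are cosmetic: the paper bounds $\norm{P_\Omega(y-\Trunc_k(y))}_2$ by maximizing over all unit $u\in\Omega$ and uses the threshold bound $\norm{\Trunc_k(y)}_2/\sqrt{k}$, whereas you test against the single vector $u=P_\Omega y/\norm{P_\Omega y}_2$ and use $\norm{y}_2/\sqrt{k}$.
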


\begin{proof}
Write $z\coloneqq\Trunc_k(y)$ and $d\coloneqq y-z$. If $d=0$, the result is immediate. Otherwise, every discarded coordinate has magnitude at most the smallest retained magnitude, which is at most $\norm{z}_2/\sqrt{k}$. By the definition of $\chi_\Omega$ and H\"older's inequality,
\begin{equation}\label{eq:projected-tail}
  \norm{P_\Omega d}_2
  =\max_{u\in\Omega\colon\norm{u}_2=1} \, \abs{\ip{u}{d}}
  \leq\sqrt{\chi_\Omega}\norm{d}_\infty
  \leq\rho\norm{z}_2.
\end{equation}
The triangle inequality therefore gives
\begin{equation}
  \norm{P_\Omega z}_2
  \geq\norm{P_\Omega y}_2-\rho\norm{z}_2.
\end{equation}
Dividing by $\norm{z}_2$ and using $\norm{z}_2\leq\norm{y}_2$, we obtain
\begin{equation}
  C_\Omega(z)\geq\frac{\norm{P_\Omega y}_2}{\norm{z}_2}-\rho
  \geq C_\Omega(y)-\rho.\qedhere
\end{equation}
\end{proof}

The last lemma shows that a full iteration does not decrease the Rayleigh quotient once the input is $k$-sparse, as it is from the $t=1$ iteration onwards. Intuitively, after normalizing $v$, the normalized truncation of $Av$
is the $k$-sparse unit vector most aligned with $Av$. Since $v$
itself is an eligible choice, the new vector is at least as aligned
with $Av$ as $v$ is. Positive semidefiniteness of $A$ then converts this alignment comparison into a nondecreasing Rayleigh quotient.

\begin{lemma}[Rayleigh-quotient monotonicity]\label{lem:monotonicity}
For every nonzero $k$-sparse vector $v$ with $R(v)>0$, the vector $\Trunc_k(Av)$ is nonzero and
\begin{equation}\label{eq:monotonicity}
  R(\Trunc_k(Av))\geq R(v).
\end{equation}
\end{lemma}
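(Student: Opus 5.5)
Normalize $v$ so that $\norm{v}_2=1$; both the hypotheses and the conclusion are invariant under this scaling. Set $y\coloneqq Av$ and $z\coloneqq\Trunc_k(y)$.

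\emph{Nonvanishing.} The hypothesis $R(v)>0$ gives $\ip{v}{y}=\ip{v}{Av}>0$, so $y\ne0$. Since $\Trunc_k$ keeps the $k$ largest-magnitude coordinates of a nonzero vector, at least one nonzero coordinate survives, and hence $z\ne0$.

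\emph{Optimality of truncation.} Among all $k$-sparse vectors $w$ with $\norm{w}_2=1$, the normalized truncation $\hat z\coloneqq z/\norm{z}_2$ maximizes $\abs{\ip{w}{y}}$. To see this, fix a support $S$ with $\abs{S}\le k$. By Cauchy--Schwarz, the best value on $S$ is $\norm{y|_S}_2$, and this is maximized by choosing $S$ to be the $k$ largest coordinates. In particular $\ip{\hat z}{y}=\norm{z}_2$, since $\ip{z}{y}=\norm{z}_2^2$. Because $v$ is itself an admissible $k$-sparse unit vector,
\begin{equation*}
  \norm{z}_2=\ip{\hat z}{Av}\ \geq\ \abs{\ip{v}{Av}}=R(v).
\end{equation*}

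\emph{Converting alignment into a Rayleigh quotient bound.} Since $A\succeq0$, the form $(x,w)\mapsto\ip{x}{Aw}$ is a positive semidefinite inner product, so Cauchy--Schwarz in it gives
\begin{equation*}
  \abs{\ip{\hat z}{Av}}^2\leq\ip{\hat z}{A\hat z}\,\ip{v}{Av}=R(z)\,R(v).
\end{equation*}
Combining this with the previous display,
\begin{equation*}
  R(v)^2\leq\abs{\ip{\hat z}{Av}}^2\leq R(z)\,R(v).
\end{equation*}
Dividing by $R(v)>0$ yields $R(z)\ge R(v)$, which is \cref{eq:monotonicity}.

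The one step needing care is the optimality claim. Ties in $\Trunc_k$ are harmless, since any choice among tied coordinates attains the same value $\norm{y|_S}_2$. It also matters that the inner product $\ip{\hat z}{y}$ is real and nonnegative, because it equals $\norm{z}_2$. Beyond that, the argument reduces to two applications of Cauchy--Schwarz.
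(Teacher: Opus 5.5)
Your proof is correct and follows the paper's proof for its first two steps, differing only in the last. The shared steps are: normalize $v$, get $z\neq 0$ from $R(v)>0$, and use the $k$-sparsity of $v$ to obtain the alignment bound $\ip{v}{Av}\le\norm{z}_2=\ip{\hat z}{Av}$. The paper writes this as $\ip{v}{\Pi Av}\le\norm{\Pi Av}_2\le\norm{z}_2$ with $\Pi$ the coordinate projection onto $\supp(v)$, which is your optimality-of-truncation argument applied to the single competitor $v$.

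The difference is how you turn alignment into a Rayleigh-quotient bound. The paper expands the quadratic form around $v$, $R(\hat z)-R(v)=2\ip{\hat z-v}{Av}+\ip{\hat z-v}{A(\hat z-v)}$. The first term is nonnegative by the alignment bound and the second by $A\succeq 0$. You instead apply Cauchy--Schwarz for the semi-inner product $\ip{\cdot}{A\,\cdot}$ and divide by $R(v)$. The paper's route never divides, so it uses $R(v)>0$ only to guarantee $z\neq0$. Your route uses that hypothesis again, but it gives the quantitatively sharper bound $R(z)\ge\norm{z}_2^2/R(v)$. By AM--GM this dominates the bound $2\norm{z}_2-R(v)$ that remains after the paper drops its quadratic term.
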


\begin{proof}
By scale invariance, we may assume that $\norm{v}_2=1$.
Since $R(v)>0$, we have $Av\ne0$, and hence
$z\coloneqq\Trunc_k(Av)\ne0$. Set
$w\coloneqq z/\norm{z}_2$.

Let $\Pi$ be the coordinate projection onto $\supp(v)$.
Since $v$ is $k$-sparse and truncation retains the largest
coordinates of $Av$, we have
\begin{equation}
  \ip{v}{Av}
  =\ip{v}{\Pi Av}
  \leq\norm{\Pi Av}_2
  \leq\norm{z}_2
  =\ip{w}{Av},
\end{equation}
where the last equality follows from the orthogonality of
the retained and discarded coordinates. 

Since $v$ and $w$ are normalized, we have
\begin{equation}
  R(w)-R(v)
  =\ip{w}{Aw}-\ip{v}{Av} =2\ip{w-v}{Av}
    +\ip{w-v}{A(w-v)}
  \geq0,
\end{equation}
where the last term is nonnegative because $A\succeq0$.
Finally, $R(w)=R(\Trunc_k(Av))$ by scale invariance,
which proves the lemma.
\end{proof}

\subsection{TPM correctness proof for ground state energy estimation}
\label{sec:proof-est}

\begin{proof}[Proof of \cref{thm:estimation}]
Every output of \cref{alg:tpm} satisfies $0\leq\mu\leq\lambda$, including the early exit with $\mu=0$. Thus, the result is immediate if $\varepsilon\geq\lambda$. 

We therefore focus on the case $\varepsilon < \lambda$, and define the target relative error as
\begin{equation}\label{eq:normalized-parameters}
  \zeta\coloneqq\frac{\varepsilon}{\lambda}\in(0,1).
\end{equation}
For each iterate, write
\begin{equation}
  c_t\coloneqq C_\Omega(v_t),\qquad R_t\coloneqq R(v_t),\qquad
  c_0=\sqrt{\delta_\Omega}.
\end{equation}
Since $\lambda\leq2\alpha$, the choices of $k$ and $L$ in \cref{eq:parameters} imply
\begin{equation}\label{eq:rho-bound}
  k\geq\frac{16\chi_\Omega}{\delta_\Omega\zeta^2},
  \qquad
  \rho\leq\frac{\zeta c_0}{4},
\end{equation}
and
\begin{equation}\label{eq:L-bound}
  L-1\geq\frac{4\alpha}{\varepsilon}\log\frac{2}{c_0}
  \geq\frac{2}{\zeta}\log\frac{2}{c_0}.
\end{equation}

We first check that all $L$ iterations are well defined. Since $A\succeq\lambda P_\Omega$, we have $R_0\geq\lambda c_0^2>0$, so $Av_0$ and $\Trunc_k(Av_0)$ are nonzero. By \cref{lem:estimation-power,lem:estimation-truncation} and $R_0\leq\lambda$,
\begin{equation}\label{eq:estimation-first-step}
  c_1\geq\sqrt{\frac{\lambda}{R_0}}\,c_0-\rho
  \geq c_0-\rho>0.
\end{equation}
Consequently, $R_1\geq\lambda c_1^2>0$. Every subsequent iterate is $k$-sparse, so induction using \cref{lem:monotonicity} rules out the zero-output branch and gives
\begin{equation}\label{eq:ordered-rayleigh}
  0<R_1\leq R_2\leq\cdots\leq R_L\leq\lambda.
\end{equation}
The initial vector need not be $k$-sparse, which is why we treat the first iteration separately.

It remains to show that $R_L\geq\lambda-\varepsilon$. Suppose, for a contradiction, that $R_L<\lambda-\varepsilon$. By \cref{eq:ordered-rayleigh}, $R_t<\lambda-\varepsilon$ for $1\leq t\leq L$. Define
\begin{equation}\label{eq:estimation-growth-factor}
  g\coloneqq\sqrt{\frac{\lambda}{\lambda-\varepsilon}}
  =(1-\zeta)^{-1/2}>1.
\end{equation}
Combining the power and truncation bounds (\cref{lem:estimation-power,lem:estimation-truncation}) now gives
\begin{equation}\label{eq:overlap-recurrence}
  c_{t+1}\geq\sqrt{\frac{\lambda}{R_t}}\,c_t-\rho
  \geq g c_t-\rho,\qquad 1\leq t<L.
\end{equation}
The affine map $c\mapsto gc-\rho$ has a repelling fixed point
\begin{equation}\label{eq:estimation-fixed-point}
  c_\star\coloneqq\frac{\rho}{g-1}>0,
  \qquad
  c_{t+1}-c_\star\geq g(c_t-c_\star).
\end{equation}
To bound the initial distance from this fixed point, observe that
\begin{equation}
  \rho+c_\star=\frac{\rho}{1-\sqrt{1-\zeta}}
  \leq\frac{2\rho}{\zeta}\leq\frac{c_0}{2}.
\end{equation}
Together with \cref{eq:estimation-first-step}, this implies
\begin{equation}
  c_1-c_\star\geq c_0-(\rho+c_\star)\geq\frac{c_0}{2}.
\end{equation}
Iterating \cref{eq:estimation-fixed-point} and using
$\log g=-\tfrac12\log(1-\zeta)\geq\zeta/2$, we obtain
\begin{equation}\label{eq:unrolled}
  c_L\geq c_\star+g^{L-1}(c_1-c_\star)
  \geq c_\star+e^{\zeta(L-1)/2}\frac{c_0}{2}
  \geq c_\star+1>1,
\end{equation}
where the last bound follows from \cref{eq:L-bound}. This contradicts $c_L\leq1$. Therefore, $R_L\geq\lambda-\varepsilon$, and $\mu=R_L\leq\lambda$ completes the proof.
\end{proof}

\begin{remark}[A single reference ground state]\label{rem:reference-state}
For a normalized $\psi\in\Omega$ with
$\delta_\psi\coloneqq\abs{\ip{\psi}{\vin}}^2>0$, the energy guarantee
holds with $\chi_\Omega$ replaced by $\norm{\psi}_1^2$ and
$\delta_\Omega$ by $\delta_\psi$. The proof uses only the projection
property and $AP_\Omega=\lambda P_\Omega$, so it applies to
$\psi\psi^\dagger$ in place of $P_\Omega$. Choosing
$\psi=P_\Omega\vin/\sqrt{\delta_\Omega}$ preserves the initial overlap
and may reduce the sparsity parameter. The formulation with $P_\Omega$ uses the common parameters of
\cref{thm:estimation,thm:preparation}.
\end{remark}

%%%%%%%%%%%%%%%%%%%%%%%%%%%%%%%%%%
\section{Ground state preparation}
\label{sec:preparation}
%%%%%%%%%%%%%%%%%%%%%%%%%%%%%%%%%%

In this section, we prove \cref{thm:preparation} by tracking the following progress measure. The quantity $T_\Omega(v)$ can be thought of as the ``tangent of the angle between $v$ and $\Omega$''. This measure was originally defined and used by Kirby et al. in \cite{accuracy_kirby_2026}.
\begin{definition}[Progress measure \cite{accuracy_kirby_2026}]\label{def:subspace-progress}
For a vector $v\in\C^N$ with $P_\Omega v\ne0$, define
\begin{equation}\label{eq:subspace-progress}
  T_\Omega(v)\coloneqq
  \frac{\norm{(I-P_\Omega)v}_2}{\norm{P_\Omega v}_2}.
\end{equation}
\end{definition}
Like $C_{\Omega}(v)$ in the previous section, $ T_\Omega(v)$ is invariant under multiplication of $v$ by a nonzero scalar. In \cref{sec:aux-prep}, we establish some lemmas that bound the tangent at each step in \cref{alg:tpm}. We then use the lemmas to prove \cref{thm:preparation} in \cref{sec:proof-prep}.

Throughout this section, let $k$ be a positive integer and set
\begin{equation}\label{eq:preparation-rho}
  \rho\coloneqq\sqrt{\frac{\chi_\Omega}{k}},
\end{equation}
where $\chi_\Omega$ is defined in \cref{eq:ground-space-parameters}. This is as in the previous section.

\subsection{Auxiliary estimates}
\label{sec:aux-prep}

We introduce three helper lemmas for proving \cref{thm:preparation}. Broadly speaking, \cref{lem:subspace-power} describes by how much the tangent factor contracts, or equivalently, how much closer we get to the ground space, after each application of $A$. Following this, \cref{lem:subspace-truncation} characterizes how much ground space overlap we give away due to truncation, which translates into an upper bound on the progress measure associated with the truncated state. Finally, \cref{lem:preparation-recurrence} shows that the progress measure at each iteration is upper bounded by a sequence that contracts to a fixed point near zero for sufficiently large $k$. This means that \cref{alg:tpm} steadily approaches the ground space for sufficient large $k$.

\cref{lem:subspace-power} already appears as \cite[Lemma 2]{accuracy_kirby_2026} but we give a proof simply for completeness. \cref{lem:subspace-truncation} is new to this work as it works with the $\ell_1$ norm of the ground state, whereas its corresponding lemma in \cite{accuracy_kirby_2026}, namely Lemma 3, works with the $\ell_0$ norm. \cref{lem:preparation-recurrence} can be viewed as modularizing parts of \cite[Proof of Theorem 1]{accuracy_kirby_2026} into a self-contained lemma about how a specific function contracts.

We prove \cref{lem:subspace-power,lem:subspace-truncation} here and \cref{lem:preparation-recurrence} in \cref{app:preparation-contraction}.

\begin{lemma}[{Power iteration \cite[Lemma 2]{accuracy_kirby_2026}}]\label{lem:subspace-power}
For every nonzero vector $v$, if $P_\Omega v\ne0$, then
\begin{equation}\label{eq:subspace-power}
  T_\Omega(Av)\leq(1-\gamma)T_\Omega(v).
\end{equation}
\end{lemma}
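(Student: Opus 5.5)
The plan is to split $v$ along $\Omega$ and its orthogonal complement and to use that $A$ commutes with $P_\Omega$. Write $v=P_\Omega v+(I-P_\Omega)v$. Since $AP_\Omega=\lambda P_\Omega$ and $A$ is Hermitian, $A$ also commutes with $I-P_\Omega$. Hence
\begin{equation*}
  P_\Omega Av=\lambda P_\Omega v,
  \qquad
  (I-P_\Omega)Av=A(I-P_\Omega)v .
\end{equation*}
In particular $P_\Omega Av\neq0$, since $\lambda>0$ and $P_\Omega v\ne0$. (Here $\lambda\ge\alpha-\norm{H}\ge0$, and $\lambda>0$ unless $A=0$, a degenerate case in which $\gamma$ is not meaningful.) So $T_\Omega(Av)$ is well defined.

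Next I bound the numerator. The vector $(I-P_\Omega)v$ lies in $\Omega^\perp$, which is the span of the eigenvectors of $H$ with eigenvalues at least $E_+$. On this subspace $A=\alpha I-H$ has eigenvalues in $[\alpha-\norm{H},\,\alpha-E_+]\subseteq[0,\,\lambda-(E_+-E_0)]=[0,\lambda(1-\gamma)]$, using $\alpha\ge\norm{H}$ from \cref{eq:spectral-bounds}. Therefore the restriction of $A$ to $\Omega^\perp$ is positive semidefinite with norm at most $\lambda(1-\gamma)$, and
\begin{equation*}
  \norm{(I-P_\Omega)Av}_2\le\lambda(1-\gamma)\norm{(I-P_\Omega)v}_2 .
\end{equation*}

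Dividing this by $\norm{P_\Omega Av}_2=\lambda\norm{P_\Omega v}_2$ gives \cref{eq:subspace-power}. If $\Omega^\perp=\{0\}$, so that $E_+$ does not exist, the numerator is zero and the claim is trivial. The only point needing care is the nonnegativity of $A$ on $\Omega^\perp$, which is what makes the operator-norm bound $\lambda(1-\gamma)$ valid rather than just an upper bound on the eigenvalues. This follows directly from $\alpha\ge\norm{H}$.
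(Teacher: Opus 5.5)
Your proposal is correct and matches the paper's proof step for step: both use $P_\Omega Av=\lambda P_\Omega v$, then bound $\norm{A(I-P_\Omega)v}_2\le\lambda(1-\gamma)\norm{(I-P_\Omega)v}_2$ because the spectrum of $A$ on $\Omega^\perp$ lies in $[0,\lambda(1-\gamma)]$, and then divide. Your extra remarks on $\lambda>0$ (so that $T_\Omega(Av)$ is defined) and on the degenerate case $\Omega^\perp=\{0\}$ are harmless additions that the paper leaves implicit.
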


\begin{proof} We give a proof for completeness. We bound the components of $Av$ in $\Omega$ and $\Omega^\perp$ separately. Since $A$ acts as multiplication by $\lambda$ on $\Omega$ and commutes with $P_\Omega$, we have
\begin{equation}
  P_\Omega Av=AP_\Omega v=\lambda P_\Omega v.
\end{equation}
Taking norms on both sides, the ground-space component satisfies
\begin{equation}
  \norm{P_\Omega Av}_2=\lambda\norm{P_\Omega v}_2.
\end{equation}

Meanwhile, the eigenvalues of $A$ on $\Omega^\perp$ lie between
zero and $\alpha-E_+=\lambda-(E_+-E_0)=\lambda(1-\gamma)$. Therefore, the component orthogonal to the ground space satisfies
\begin{equation}
  \norm{(I-P_\Omega)Av}_2
  =\norm{A(I-P_\Omega)v}_2
  \leq\lambda(1-\gamma)\norm{(I-P_\Omega)v}_2.
\end{equation}

Finally, by the definition of $T_\Omega$, we obtain
\begin{equation}
  T_\Omega(Av)
  =\frac{\norm{(I-P_\Omega)Av}_2}{\norm{P_\Omega Av}_2}
  \leq\frac{\lambda(1-\gamma)\norm{(I-P_\Omega)v}_2}
              {\lambda\norm{P_\Omega v}_2}
  =(1-\gamma)T_\Omega(v).
\end{equation}
\end{proof}

The next lemma bounds the effect of truncation on ground space overlap and progress.

\begin{lemma}[Truncation]\label{lem:subspace-truncation}
Let $\theta$ be a vector with $P_\Omega\theta\ne0$, and set
\begin{equation}
  a\coloneqq\norm{P_\Omega\theta}_2,\qquad
  x\coloneqq T_\Omega(\theta),\qquad
  \omega\coloneqq\Trunc_k(\theta).
\end{equation}
Then
\begin{equation}\label{eq:subspace-overlap}
  \norm{P_\Omega\omega}_2\geq a(1-\rho x-\rho^2).
\end{equation}
Moreover, if $\rho x+\rho^2<1$, then $P_\Omega\omega\neq0$ and
\begin{equation}\label{eq:subspace-truncation}
  T_\Omega(\omega)
  \leq\frac{x+\sqrt2\rho}{1-\rho x-\rho^2}.
\end{equation}
\end{lemma}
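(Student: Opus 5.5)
We follow the sketch in the technical overview, replacing the single ground state by the projector $P_\Omega$. Write $\theta=P_\Omega\theta+(I-P_\Omega)\theta$ with $b\coloneqq\norm{(I-P_\Omega)\theta}_2=ax$. Let $\Pi$ be the coordinate projection onto the $k$ retained coordinates, so $\omega=\Pi\theta$, and set $d\coloneqq(I-\Pi)\theta$ and $\tau\coloneqq\norm{d}_\infty$. If $d=0$ everything is trivial, so assume $d\ne0$.

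\textbf{Overlap bound.} As in the proof of \cref{lem:estimation-truncation}, the variational characterization and the definition of $\chi_\Omega$ give
\begin{equation*}
  \norm{P_\Omega d}_2=\max_{u\in\Omega,\ \norm{u}_2=1}\abs{\ip{u}{d}}\leq\sqrt{\chi_\Omega}\,\tau .
\end{equation*}
The triangle inequality then yields $\norm{P_\Omega\omega}_2\geq a-\sqrt{\chi_\Omega}\,\tau$.

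To bound $\tau$, we avoid the crude estimate $\tau\le\norm{\omega}_2/\sqrt k$ and instead use the $\ell_1$ argument. Every retained entry has magnitude at least $\tau$, so
\begin{equation*}
  k\tau\leq\norm{\Pi\theta}_1\leq\norm{\Pi P_\Omega\theta}_1+\norm{\Pi(I-P_\Omega)\theta}_1 .
\end{equation*}
The first term is at most $\norm{P_\Omega\theta}_1\leq\sqrt{\chi_\Omega}\,a$, by applying the definition of $\chi_\Omega$ to the unit vector $P_\Omega\theta/a\in\Omega$. The second term is at most $\sqrt{k}\,b$ by Cauchy--Schwarz on $k$ coordinates. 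Hence
\begin{equation*}
  \sqrt{\chi_\Omega}\,\tau\leq \frac{\chi_\Omega}{k}\,a+\sqrt{\frac{\chi_\Omega}{k}}\,b=\rho^2 a+\rho a x ,
\end{equation*}
which gives \cref{eq:subspace-overlap}.

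\textbf{Tangent bound.} If $\rho x+\rho^2<1$, the right side of \cref{eq:subspace-overlap} is positive, so $P_\Omega\omega\ne0$. For the numerator we need $\norm{(I-P_\Omega)\omega}_2\leq a(x+\sqrt2\rho)$. Write $(I-P_\Omega)\omega=(I-P_\Omega)\theta-(I-P_\Omega)d$. The first piece has norm $ax$, so it suffices to show $\norm{(I-P_\Omega)d}_2\leq\sqrt2\rho a$.

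This is the main obstacle, because $\norm{d}_2$ is not controlled by $\tau$ alone. My first attempt is to use orthogonality of $\omega$ and $d$, which gives
\begin{equation*}
  \norm{\omega}_2^2+\norm{d}_2^2=a^2+b^2 .
\end{equation*}
I would then combine this with the overlap estimates. Since $\norm{P_\Omega\omega}_2\ge a-\sqrt{\chi_\Omega}\tau$ and $\norm{P_\Omega d}_2\le \sqrt{\chi_\Omega}\tau$, the component of $d$ along $\Omega$ is small, and the question is how the $\Omega^\perp$ part of $d$ compares with $b$.

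An alternative, likely cleaner, route is to decompose the $\Omega^\perp$ part of $\omega$ differently. Note that $(I-P_\Omega)\omega=\omega-P_\Omega\omega$ and $P_\Omega\omega=P_\Omega\theta-P_\Omega d$. Then
\begin{equation*}
  (I-P_\Omega)\omega=(\Pi-I)P_\Omega\theta+\Pi(I-P_\Omega)\theta+P_\Omega d .
\end{equation*}
The middle term has norm at most $b=ax$, and the last has norm at most $\rho a(x+\rho)$. The first term is the tail of the ground component $P_\Omega\theta$ outside the retained set. I would bound it using $\norm{(I-\Pi)P_\Omega\theta}_2^2\leq\norm{(I-\Pi)P_\Omega\theta}_\infty\norm{P_\Omega\theta}_1$. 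The sup-norm factor is related to $\tau$ plus the tail of the orthogonal part, and the $\ell_1$ factor is at most $\sqrt{\chi_\Omega}\,a$.

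Collecting terms, and possibly merging two cross terms of size $\rho a$ via $\sqrt{p^2+q^2}$ (which is where $\sqrt2$ should arise), would give the numerator bound $a(x+\sqrt2\rho)$ up to terms absorbable in the denominator. Dividing by \cref{eq:subspace-overlap} then gives \cref{eq:subspace-truncation}. I expect the bookkeeping to land exactly on $\sqrt2\rho$ to be the delicate part, and may need orthogonality of the retained and discarded coordinates to combine two $\rho a$-sized pieces in quadrature.
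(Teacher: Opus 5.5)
Your proof of the overlap bound \eqref{eq:subspace-overlap} is correct and essentially matches the paper's. The $\ell_1$ threshold estimate $k\tau\le a\sqrt{\chi_\Omega}+b\sqrt k$ is the same, and bounding $\norm{P_\Omega d}_2\le\sqrt{\chi_\Omega}\,\tau$ variationally, instead of testing against the single direction $u=P_\Omega\theta/a$, is an inessential difference.

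The tangent bound, however, is not proved, and both routes you sketch for the numerator fail. The reduction to $\norm{(I-P_\Omega)d}_2\le\sqrt2\rho a$ targets a false inequality. Take $\Omega=\mathrm{span}\{e_1\}$, so $\rho=1/\sqrt k$, and $\theta=a e_1+\frac{a}{k}\sum_{j=2}^{k^2+1}e_j$, so $x=1$. Then $d\in\Omega^\perp$ and $\norm{d}_2\ge\frac{\sqrt3}{2}a$, which exceeds $\sqrt2\rho a=a\sqrt{2/k}$ once $k\ge3$. Discarding $\Omega^\perp$-mass only helps, but the triangle inequality charges it as error.

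In your three-term decomposition, the piece $(I-\Pi)P_\Omega\theta$ is not $O(\rho a)$ either. A discarded coordinate has $\abs{\theta_j}\le\tau$, but $(P_\Omega\theta)_j$ can still be large there, cancelled by $((I-P_\Omega)\theta)_j$. For example, take $\Omega=\mathrm{span}\{(e_1+e_2)/\sqrt2\}$ and $\theta=\sqrt{2}\,a\,e_1+c\sum_{j=3}^{k+2}e_j$ with small $c\ne0$. Coordinate $2$ is discarded although $(P_\Omega\theta)_2=a/\sqrt2$, for every $k$. Even without cancellation, the $\ell_\infty$--$\ell_1$ interpolation only gives $a\sqrt{\rho(x+\rho)}$. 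So no piecewise bookkeeping can land on $x+\sqrt2\rho$.

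The missing step uses the identity you already wrote down. Orthogonality of retained and discarded coordinates gives $\norm{\omega}_2^2\le\norm{\theta}_2^2=a^2+b^2$. The $\Omega^\perp$ part of $\omega$ should then be bounded globally by Pythagoras together with the overlap bound, not piece by piece. With $s\coloneqq\rho x+\rho^2<1$, squaring \eqref{eq:subspace-overlap} is legitimate since its right side is nonnegative, and
\[
\norm{(I-P_\Omega)\omega}_2^2=\norm{\omega}_2^2-\norm{P_\Omega\omega}_2^2\le a^2+a^2x^2-a^2(1-s)^2\le a^2\bigl(x^2+2\rho x+2\rho^2\bigr)\le a^2\bigl(x+\sqrt2\rho\bigr)^2 .
\]
Dividing by $\norm{P_\Omega\omega}_2\ge a(1-s)$ gives \eqref{eq:subspace-truncation}. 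This is exactly the paper's argument, phrased there with $\norm{\theta}_2=1$ and $\widehat\omega=\omega/\norm{\omega}_2$. The $\sqrt2$ comes from the elementary inequality $\sqrt{x^2+2\rho x+2\rho^2}\le x+\sqrt2\rho$, not from combining cross terms in quadrature.
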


\begin{proof}
For convenience, we may assume $\theta$ is normalized without loss of generality. We first prove \cref{eq:subspace-overlap} and then \cref{eq:subspace-truncation}.

We begin by writing
\begin{equation}\label{eq:preparation-decomposition}
  \theta=au+bu^\perp,\qquad
  u\coloneqq\frac{P_\Omega\theta}{a},\qquad
  b\coloneqq\norm{(I-P_\Omega)\theta}_2,
\end{equation}
where $u\in\Omega$ and $u^\perp\in\Omega^\perp$ are normalized and $x=b/a$. By the definition of $\chi_\Omega$, we have $\norm{u}_1\leq\sqrt{\chi_\Omega}$.

Let $F$ be the set of the $k$ retained coordinates, and let $\Pi_F$ be the coordinate projection onto $F$. Since $\omega=\Pi_F\theta$ and $\ip{u}{\theta}=a$, the triangle inequality gives
\begin{equation}
  \norm{P_\Omega\omega}_2
  \geq\abs{\ip{u}{\omega}}
  \geq \abs{\ip u\theta} - \abs{\ip {u}{(I-\Pi_F)\theta}} =a-\abs{\ip{u}{(I-\Pi_F)\theta}}.
\end{equation}
It therefore suffices to prove
\begin{equation}\label{eq:preparation-overlap-target}
  \abs{\ip{u}{(I-\Pi_F)\theta}}
  \leq a(\rho x+\rho^2).
\end{equation}
The $(1,\infty)$ case of H\"older's inequality gives
\begin{equation}\label{eq:preparation-holder}
  \abs{\ip{u}{(I-\Pi_F)\theta}}
  \leq\norm{u}_1\norm{(I-\Pi_F)\theta}_\infty.
\end{equation}
Define the truncated basis state magnitude threshold
\begin{equation}
  \tau\coloneqq\min_{j\in F}\abs{\theta_j}.
\end{equation}
Every discarded coordinate has absolute value at most $\tau$, so
$\norm{(I-\Pi_F)\theta}_\infty\leq\tau$. To bound $\tau$, use
\cref{eq:preparation-decomposition} and the fact that $F$ has
cardinality $k$ to obtain
\begin{equation}
  k\tau\leq\sum_{j\in F}\abs{\theta_j}
  \leq a\sum_{j\in F}\abs{u_j}
       +b\sum_{j\in F}\abs{u^\perp_j}
  \leq a\norm{u}_1+b\sqrt{k},
\end{equation}
where the last inequality is Cauchy-Schwarz. Consequently,
\begin{equation}\label{eq:subspace-threshold}
  \tau\leq\frac{a\norm{u}_1}{k}+\frac{b}{\sqrt{k}}
  \leq\frac{a}{\sqrt{k}}(\rho+x).
\end{equation}
Substituting this bound into \cref{eq:preparation-holder} yields
\begin{equation}\label{eq:subspace-discarded}
  \abs{\ip{u}{(I-\Pi_F)\theta}}
  \leq\norm{u}_1\tau
  \leq\frac{a\sqrt{\chi_\Omega}}{\sqrt{k}}(\rho+x)
  =a\rho(\rho+x).
\end{equation}
This proves \cref{eq:preparation-overlap-target}, and hence
\cref{eq:subspace-overlap}.

Now, we translate the obtained overlap bound into a bound on progress. Suppose $\rho x+\rho^2<1$. By \cref{eq:subspace-overlap}, $P_\Omega\omega\ne0$. Since the progress measure is scale invariant, $T_\Omega(\omega)=T_\Omega(\widehat\omega)$, where $\widehat\omega\coloneqq\omega/\norm{\omega}_2$. Truncation does not increase the Euclidean norm, so
\begin{equation}
  \norm{P_\Omega\widehat\omega}_2
  \geq\norm{P_\Omega\omega}_2
  \geq a(1-\rho x-\rho^2).
\end{equation}
It follows that
\begin{equation}
  T_\Omega(\omega)
  =\frac{\sqrt{1-\norm{P_\Omega\widehat\omega}_2^2}}
          {\norm{P_\Omega\widehat\omega}_2}\notag\leq\frac{\sqrt{1-a^2(1-\rho x-\rho^2)^2}}
             {a(1-\rho x-\rho^2)}=\frac{\sqrt{a^{-2}-1+2(\rho x+\rho^2)-(\rho x+\rho^2)^2}}
          {1-\rho x-\rho^2}.
\end{equation}
Recalling $x^2=b^2/a^2=(1-a^2)/a^2=a^{-2}-1$, we conclude that
\begin{equation}
  T_\Omega(\omega)
  \leq\frac{\sqrt{x^2+2\rho x+2\rho^2}}{1-\rho x-\rho^2}
  \leq\frac{x+\sqrt2\rho}{1-\rho x-\rho^2}.\qedhere
\end{equation}
\end{proof}

The scalar recurrence is controlled by the following lemma, proved in \cref{app:preparation-contraction}. 

\begin{restatable}[Contraction rate]{lemma}{contractionrate}\label{lem:preparation-recurrence}
Let $0<\gamma,\delta\leq1$ and $\rho>0$. Suppose that
$\rho\leq C\gamma\sqrt\delta$ for some $C\in(0,1/4]$. Define
\begin{equation}\label{eq:preparation-map}
  f(x)\coloneqq
  \frac{(1-\gamma)x+\sqrt2\rho}
       {1-\rho^2-\rho(1-\gamma)x},\qquad 0\leq x\leq1/\sqrt\delta.
\end{equation}
Then $f$ is well defined and nondecreasing on $[0,1/\sqrt\delta]$, and there is a unique fixed point $\xi_\star\in[0,1/\sqrt\delta]$ satisfying $0<\xi_\star<2\rho/\gamma$. If $\xi_0\in[0,1/\sqrt\delta]$ and $\xi_{t+1}\coloneqq f(\xi_t)$, then
\begin{equation}\label{eq:preparation-scalar-rate}
  \xi_t\in[0,1/\sqrt\delta],\qquad
  \abs{\xi_t-\xi_\star}\leq\frac{e^{-\gamma t/8}}{\sqrt\delta},
  \qquad t\geq0.
\end{equation}
\end{restatable}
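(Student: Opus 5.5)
The plan is to treat $f$ as a Möbius-type map and prove the claims in sequence: well-definedness and monotonicity, invariance of $[0,1/\sqrt\delta]$, existence and location of the fixed point, and finally a uniform contraction bound.

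\emph{Well-definedness and monotonicity.} On $[0,1/\sqrt\delta]$ the denominator satisfies
\begin{equation}
  1-\rho^2-\rho(1-\gamma)x\geq 1-\rho^2-\rho/\sqrt\delta\geq 1-\tfrac1{16}-\tfrac14>0,
\end{equation}
using $\rho\leq\gamma\sqrt\delta/4\leq 1/4$ and $\rho/\sqrt\delta\leq\gamma/4\leq 1/4$. Write $f(x)=(px+q)/(r-sx)$ with
\begin{equation}
  p=1-\gamma,\quad q=\sqrt2\rho,\quad r=1-\rho^2,\quad s=\rho(1-\gamma).
\end{equation}
Then
\begin{equation}
  f'(x)=\frac{pr+qs}{(r-sx)^2}>0,
\end{equation}
so $f$ is increasing. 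It is also convex, since the denominator is decreasing and positive.

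\emph{Invariance.} Because $f$ is increasing, it suffices to check $f(1/\sqrt\delta)\leq 1/\sqrt\delta$. This is equivalent to
\begin{equation}
  (1-\gamma)+\sqrt{2\delta}\,\rho\leq 1-\rho^2-\rho(1-\gamma)/\sqrt\delta,
\end{equation}
that is, $\gamma\geq \rho^2+\sqrt{2\delta}\rho+\rho(1-\gamma)/\sqrt\delta$. With $\rho\leq\gamma\sqrt\delta/4$ the right side is at most
\begin{equation}
  \gamma\bigl(\tfrac{1}{16}+\tfrac{\sqrt2}{4}+\tfrac14\bigr)<\gamma.
\end{equation}
Since also $f(0)>0$, the interval maps into itself.

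\emph{Fixed point.} Set $g(x)=f(x)-x$. Then $g(0)>0$ and $g(1/\sqrt\delta)<0$ (the strict slack above), so a fixed point exists. It is unique because $g$ is convex, and a convex function that is positive at $0$ and negative at $1/\sqrt\delta$ has exactly one zero on the interval. For the location, check $g(2\rho/\gamma)<0$, i.e.
\begin{equation}
  (1-\gamma)\tfrac{2\rho}{\gamma}+\sqrt2\rho<\tfrac{2\rho}{\gamma}\Bigl(1-\rho^2-\tfrac{2\rho^2(1-\gamma)}{\gamma}\Bigr).
\end{equation}
Dividing by $2\rho/\gamma$, this becomes $\gamma(1-1/\sqrt2)>\rho^2+2\rho^2(1-\gamma)/\gamma$. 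This holds because $\rho^2\leq\gamma^2/16$ makes the right side at most $3\gamma/16<0.29\gamma$. Note $2\rho/\gamma\leq\sqrt\delta/2\leq 1/\sqrt\delta$, so this point lies in the interval. Since $g$ changes sign exactly once, $0<\xi_\star<2\rho/\gamma$.

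\emph{Contraction rate, the main obstacle.} We need a uniform Lipschitz bound $f'(x)\leq 1-\gamma/8$ on $[0,1/\sqrt\delta]$. Then the mean value theorem gives
\begin{equation}
  \abs{\xi_{t+1}-\xi_\star}\leq(1-\gamma/8)\abs{\xi_t-\xi_\star},
\end{equation}
and hence $\abs{\xi_t-\xi_\star}\leq e^{-\gamma t/8}\abs{\xi_0-\xi_\star}\leq e^{-\gamma t/8}/\sqrt\delta$, since both points lie in $[0,1/\sqrt\delta]$. By convexity, $f'$ is maximized at $x=1/\sqrt\delta$. Writing $\beta=\rho/\sqrt\delta\leq\gamma/4$, we have
\begin{equation}
  f'\leq\frac{(1-\gamma)(1-\rho^2)+\sqrt2\rho^2(1-\gamma)}{(1-\rho^2-\beta(1-\gamma))^2}
  \leq\frac{(1-\gamma)(1+\rho^2/2)}{(1-\rho^2-\beta)^2}.
\end{equation}
Here the difficulty is that the denominator loss $\beta$ can be comparable to $\gamma/4$, so the estimate must be done carefully. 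The denominator is at least $1-2\beta-2\rho^2\geq 1-\gamma/2-\gamma^2/8$. The numerator is at most $(1-\gamma)(1+\gamma^2/32)$. It then remains to verify
\begin{equation}
  (1-\gamma)(1+\gamma^2/32)\leq(1-\gamma/8)(1-\gamma/2-\gamma^2/8)
\end{equation}
for $\gamma\in(0,1]$. Expanding, the right side is $1-5\gamma/8-\gamma^2/16+\dots$, which beats $1-\gamma+\gamma^2/32$ by a margin of order $3\gamma/8$, so this is a routine polynomial check. If this direct estimate turns out to be too lossy, the fallback is to use instead the weaker contraction near the fixed point combined with monotone convergence from either side, since $f$ increasing makes the iterates monotone.
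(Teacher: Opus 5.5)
Your proof is correct and follows essentially the paper's route. You bound the denominator, compute $f'=(pr+qs)/(r-sx)^2$, verify $f(1/\sqrt\delta)<1/\sqrt\delta$ and $f(2\rho/\gamma)<2\rho/\gamma$, and get the rate from a uniform bound on $f'$ plus the mean value theorem. Your constant $1-\gamma/8$ is weaker than the paper's $1-3\gamma/8$ but suffices, and the polynomial check you called routine does close, with difference $\tfrac{3\gamma}{64}(8-2\gamma+\gamma^2)>0$.

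The only real deviation is that you get uniqueness from convexity of $f(x)-x$, whereas the paper gets it from the contraction. There is also a harmless slip: at $\gamma=1$ one has $pr+qs=0$, so $f$ is constant rather than strictly increasing, which still gives the required monotonicity.
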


\subsection{TPM correctness proof for ground state preparation}
\label{sec:proof-prep}

\begin{proof}[Proof of \cref{thm:preparation}]
For each iterate, write $x_t\coloneqq T_\Omega(v_t)$. Whenever $x_t\leq1/\sqrt{\delta_\Omega}$, the choice of $k$ gives
\begin{equation}
    \rho\leq\frac{\gamma\sqrt{\delta_\Omega}}4,\qquad
  \rho\leq\frac{\gamma\sqrt\eta}{5}.
\end{equation}
The first bound above implies
\begin{equation}\label{eq:preparation-denominator}
  \rho T_\Omega(Av_t)+\rho^2
  \leq\rho(1-\gamma)x_t+\rho^2
  \leq\frac\gamma4+\frac{\gamma^2}{16}
  \leq\frac{5\gamma}{16}<1.
\end{equation}
Thus \cref{lem:subspace-truncation} applies to
$Av_t/\norm{Av_t}_2$ and ensures that the next iterate has nonzero ground space projection. Scale invariance of $T_\Omega$ and the two one-step estimates yield
\begin{align}
  x_{t+1}
  =T_\Omega\bigl(\Trunc_k(Av_t)\bigr)
  \leq\frac{T_\Omega(Av_t)+\sqrt2\rho}
              {1-\rho T_\Omega(Av_t)-\rho^2}
  \leq\frac{(1-\gamma)x_t+\sqrt2\rho}
              {1-\rho(1-\gamma)x_t-\rho^2}
   =f(x_t),\label{eq:preparation-iterate-map}
\end{align}
where $f$ is as in \cref{eq:preparation-map}. The second inequality follows from \cref{lem:subspace-power}, since the fraction increases with $T_\Omega(Av_t)$ wherever its denominator is positive.

Recall that $\delta_\Omega=\norm{P_\Omega v_0}_2^2$. Thus, the initial progress satisfies
\begin{equation}\label{eq:preparation-smallness}
  x_0=\sqrt{\frac{1-\delta_\Omega}{\delta_\Omega}}
       \leq\frac1{\sqrt{\delta_\Omega}}.
\end{equation}
Next, we apply \cref{lem:preparation-recurrence} with
$\delta=\delta_\Omega$, $C=1/4$, and $\xi_0=x_0$. Since $f$ is
nondecreasing and the scalar iterates remain in
$[0,1/\sqrt{\delta_\Omega}]$, induction using
\cref{eq:preparation-denominator,eq:preparation-iterate-map} shows that all iterations are well defined and $0\leq x_t\leq\xi_t$. In particular,
\begin{equation}\label{eq:preparation-angle-bound}
  0\leq x_L\leq\frac{2\rho}{\gamma}
       +\frac{e^{-\gamma L/8}}{\sqrt{\delta_\Omega}}.
\end{equation}
Using $k\geq k_{\prep}$ and $L\geq L_{\prep}$, we obtain
\begin{equation}
  0\leq x_L\leq\frac25\sqrt\eta+\frac15\sqrt\eta\leq\sqrt\eta.
\end{equation}
Finally, write $v_L=a_Lu_L+b_Lu_L^\perp$, where $u_L\in\Omega$ and
$u_L^\perp\in\Omega^\perp$ are normalized. Since $x_L=b_L/a_L$, we have $b_L\leq x_L$ and
\begin{equation}
  \norm{P_\Omega v_L}_2^2=1-b_L^2\geq1-x_L^2\geq1-\eta.
\end{equation}
Moreover, $A\succeq\lambda P_\Omega$ implies
\begin{equation}
  \mu=\ip{v_L}{Av_L}
  \geq a_L^2\lambda=(1-b_L^2)\lambda
  \geq(1-x_L^2)\lambda\geq(1-\eta)\lambda.
\end{equation}
The upper bound $\mu\leq\lambda$ is immediate.
\end{proof}

We now give a comparison between \cref{thm:estimation,thm:preparation}. For any unit $v$ and lower bound $\Delta$ on $E_1-E_0$, we have
$1-\norm{P_\Omega v}_2^2\leq(\ip{v}{Hv}-E_0)/\Delta$. Therefore, an infidelity of at most $\eta$ can also be obtained by estimating the
energy to error at most $\Delta\eta$. Plugging $\varepsilon = \Delta\eta$ in \cref{thm:estimation} leads to a runtime that is worse than that obtained in \cref{thm:preparation} for the preparation task. On the other hand, the runtimes resulting from \cref{thm:estimation,thm:preparation} for the estimation task are incomparable. For more details, see  \cref{tab:complexity-comparison-intro,tab:complexity-comparison-intro} and \cref{eq:l1-norm-bound}.

%%%%%%%%%%%%%%%%%%%%%%%%%%%%%%%%%%%%%%%%%%%%%
\section{Comparing TPM to quantum algorithms}
\label{sec:dequantization}
%%%%%%%%%%%%%%%%%%%%%%%%%%%%%%%%%%%%%%%%%%%%%

In this section, we compare the runtimes of TPM (\cref{alg:tpm}) and Lin and Tong's quantum algorithm in \cite{preparation_lin_2020}. As discussed in the introduction, the algorithm in \cite{preparation_lin_2020} is an appropriate representative for all quantum algorithms for GLHP because it is near-optimal. To obtain the TPM runtime, we plug the settings of $k$ and $L$ derived in \cref{sec:results} into the bound \cref{eq:general-runtime} in \cref{sec:cost}. We assume that both algorithms start from an input Hamiltonian $H$ that is given as a linear combination of $M$ Pauli terms, $H = \sum_{i=1}^M \alpha_i P_i$.

For simplicity, we assume the ground state $\psi$ is nondegenerate. The runtimes are summarized in \cref{tab:complexity-comparison}. Compared to \cref{tab:complexity-comparison-intro} in the introduction, we have included the explicit dependence on the sparsity $S$ of the initial guess state. \cref{tab:complexity-comparison} shows that the runtime of TPM is polynomially related to that of the quantum algorithm when $\norm{\psi}_1 \leq \mathrm{poly}(n)$. Recall that $\delta$ is the initial overlap $\abs{\ip{\vin}{\psi}}^2$, and $\alpha$ is the Pauli coefficient sum $\sum_{i=1}^M \abs{\alpha_i}$, and $\Delta$ is a lower bound on the spectral gap.

\begin{table}[htbp]
  \centering
  \setlength{\tabcolsep}{14pt}
  \setlength{\aboverulesep}{8pt}
  \setlength{\belowrulesep}{8pt}
  \begin{tabular}{lcc}
    \toprule
    Ground state task & Quantum algorithm (\cite{preparation_lin_2020}) & TPM (this work) \\
    \midrule
    Energy estimation
      & $\dfrac{M\alpha}{\sqrt{\delta}\,\varepsilon} + \dfrac{S}{\sqrt{\delta}}$
      & $\norm{\psi}_1^2 \cdot \dfrac{M\alpha^3}{ \delta \varepsilon^3} +SM$ \\
    \addlinespace[8pt]
    Preparation
      & $\dfrac{M\alpha}{\sqrt{\delta}\,\Delta}  \log \Bigl(\dfrac 1 \eta \Bigr)  + \dfrac{S}{\sqrt{\delta}}$
      & $\norm{\psi}_1^2 \cdot \dfrac{M\alpha^3}{ \Delta^3} \max\Bigl(\dfrac{1}{\delta} , \dfrac{1}{\eta} \Bigr) +SM$ \\
    \bottomrule
  \end{tabular}
  \vspace{8pt}
  \caption{Runtimes of TPM and the quantum algorithm, including the dependence on the support size $S$ of $\vin$. Bounds hide constant and polylogarithmic factors.}
  \label{tab:complexity-comparison}
\end{table}

The runtimes in \cite{preparation_lin_2020} are given in terms of queries to two unitaries $U_H$ and $U_I$, where $U_H$ is a block encoding of $H/\alpha$ and $U_I$ prepares the initial guiding state $\ket{\vin}$. To convert query complexities into actual runtimes starting from the input $H$, we need to multiply them by the cost per query. To construct $U_H$, we use the linear combination of unitaries (LCU) method \cite{lcu_childs_2012,simulation_berry_2015}, which is treated pedagogically in \cite{algorithms_dalzell_2023}. Section 10.2 of~\cite{algorithms_dalzell_2023} shows that the cost per query to $U_H$ is then $\widetilde{O}(M)$. Using a similar approach, the cost per query to $U_I$ is $\widetilde{O}(S)$. For \emph{generic} $H$ and $\ket{\vin}$, there are nearly matching lower bounds of $\Omega(M)$ and $\Omega(S)$ by standard parameter counting arguments \cite{state_plesch_2011}.

\subsection{Ground state energy estimation}
\label{sec:dequantization-energy}

For the runtime of the quantum algorithm, we use Theorem 8 in \cite{preparation_lin_2020}. For the reader's convenience, we reproduce that theorem below using our notation. Recall that $H = \sum_{i=1}^M \alpha_i P_i$ is a linear combination of $M$ Pauli terms, and $\alpha \coloneqq \sum_{i=1}^M \abs{\alpha_i}$.

\begin{theorem}[{\cite[Theorem 8]{preparation_lin_2020}, paraphrased}]\label{thm:lt-gs}
Suppose we have a Hamiltonian $H$  given through its $(\alpha,\lceil\log_2 M\rceil,0)$-block-encoding $U_H$ using the LCU method~\cite{algorithms_dalzell_2023}. Also suppose we have an initial state $\ket{\phi_0}$ prepared by a circuit $U_I$, as well as the promise $\abs{\ip{\phi_0}{\psi_0}} \geq \sqrt {\delta}>0$. Then the ground state energy can be estimated to error at most $\varepsilon$ with probability at least $1-\vartheta$ using
    \begin{equation}
    \label{eq:energy-queries-UH}
        O\!\left(
            \frac{\alpha}{\sqrt{\delta}\,\varepsilon}
            \log\!\left(\frac{\alpha}{\varepsilon}\right)
            \log\!\left(\frac{1}{\sqrt{\delta}}\right)
            \log\!\left(
                \frac{\log(\alpha/\varepsilon)}{\vartheta}
            \right)
        \right)
    \end{equation}
    queries to $U_H$, and
    \begin{equation}
    \label{eq:energy-queries-UI}
        O\!\left(
            \frac{1}{\sqrt{\delta}}
            \log\!\left(\frac{\alpha}{\varepsilon}\right)
            \log\!\left(
                \frac{\log(\alpha/\varepsilon)}{\vartheta}
            \right)
        \right)
    \end{equation}
    queries to $U_I$, alongside some ancillary qubits and one- and two-qubit gates. The nonquery costs are subdominant and will be hidden in the total runtime bound below using $\tilde O$ notation.
\end{theorem}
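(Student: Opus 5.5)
Since \cref{thm:lt-gs} is a restatement of \cite[Theorem 8]{preparation_lin_2020}, the proof consists of checking that our setting satisfies the hypotheses of that theorem and then translating its parameters into our notation. No new analysis is needed.

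\textbf{Step 1: the block-encoding.} By the LCU construction \cite[Section 10.2]{algorithms_dalzell_2023}, we prepare the ancilla state $\sum_j \sqrt{\abs{\alpha_j}/\alpha}\,\ket{j}$ on $\lceil\log_2 M\rceil$ qubits and apply the controlled Paulis $\mathrm{sign}(\alpha_j)P_j$. This yields an exact (zero-error) block-encoding of $H/\alpha$. Since $\norm{H}\leq\alpha$, the encoded operator has spectrum in $[-1,1]$. Lin and Tong work with a normalized Hamiltonian whose spectrum lies in a fixed interval, possibly after an affine shift and rescaling by a constant. If their convention requires such a shift, we absorb it with one additional LCU term (a multiple of $I$) or a constant rescaling. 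Either change alters the normalization only by a constant factor and does not affect the $O(\cdot)$ bounds.

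\textbf{Step 2: parameter translation.} In the normalized problem, the overlap lower bound is $\sqrt\delta$, and the target precision becomes $\varepsilon/\alpha$, because an $\varepsilon/\alpha$-accurate estimate of the ground energy of $H/\alpha$ gives an $\varepsilon$-accurate estimate for $H$ after multiplying by $\alpha$. We then substitute these into Lin and Tong's query bounds for the block-encoding and for the state preparation circuit, with failure probability $\vartheta$:
\[
\widetilde O\!\left(\frac{1}{\sqrt\delta\,(\varepsilon/\alpha)}\right)=\widetilde O\!\left(\frac{\alpha}{\sqrt\delta\,\varepsilon}\right)
\quad\text{queries to } U_H,
\qquad
\widetilde O\!\left(\frac{1}{\sqrt\delta}\right)
\quad\text{queries to } U_I .
\]
With the explicit logarithmic factors $\log(\alpha/\varepsilon)$, $\log(1/\sqrt\delta)$ and $\log(\log(\alpha/\varepsilon)/\vartheta)$ written out, these are exactly \cref{eq:energy-queries-UH,eq:energy-queries-UI}. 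Their algorithm is a binary search over energy windows, implemented by QSP-based projectors combined with amplitude estimation. It uses only a logarithmic number of extra ancilla qubits and elementary gates per query, which justifies the remark that the non-query costs are subdominant.

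\textbf{Main obstacle.} The only delicate point is matching conventions. Lin and Tong phrase the overlap as $\abs{\ip{\phi_0}{\psi_0}}\geq\gamma$ and the precision relative to a normalized spectrum; they also need some prior spectral localization, such as the search range $[-1,1]$. I would check their statement carefully so that $\gamma=\sqrt\delta$ and the precision $\varepsilon/\alpha$ appear exactly where claimed. I would also confirm that any initial-interval assumption is automatically met because $\norm{H/\alpha}\leq1$.
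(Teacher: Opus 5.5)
Your proposal is correct and matches the paper, which gives no separate proof and simply imports Lin and Tong's Theorem 8 with their overlap parameter $\gamma$ set to $\sqrt{\delta}$ and $m=\lceil\log_2 M\rceil$ ancillas from the LCU block-encoding. One small simplification: Lin and Tong already state that theorem for an $(\alpha,m,0)$-block-encoding of the unnormalized $H$ with absolute precision $\varepsilon$, so your rescaling to $\varepsilon/\alpha$ and the hedge about shifting the spectrum are unnecessary, and the substitution $\gamma\to\sqrt{\delta}$ is the only translation needed.
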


\emph{Runtime of quantum algorithm.} We can bound the quantum algorithm's total runtime by summing up the total runtime of queries to $U_H$ and $U_I$. We only pay attention to the leading orders, which are the nonlogarithmic terms. The total runtime associated with $U_H$ is
\begin{equation}\label{eq:cost-UH-energy}
    \widetilde{O}\Bigl(\frac{\alpha}{\sqrt{\delta}\,\varepsilon}\Bigr)
    \cdot \widetilde{O}(M)
    = \widetilde{O}\Bigl(\frac{M\alpha}{\sqrt{\delta}\,\varepsilon}\Bigr),
\end{equation}
and that with $U_I$ is
\begin{equation}\label{eq:cost-UI-energy}
    \widetilde{O}\Bigl(\frac{1}{\sqrt{\delta}} \log \Bigl( \frac{\alpha}{\varepsilon}\Bigr)\Bigr)
    \cdot \widetilde{O}(S)
    = \widetilde{O}\Bigl(\frac{S}{\sqrt{\delta}}\log \Bigl( \frac{\alpha}{\varepsilon}\Bigr)\Bigr).
\end{equation}
Adding \cref{eq:cost-UH-energy} and \cref{eq:cost-UI-energy}, we obtain the overall runtime bound\footnote{The factor of $\log(\alpha/\epsilon)$ multiplying the second term can be suppressed since $\alpha/\epsilon$ appears in the first term.}
\begin{equation}\label{eq:lt-cost-energy}
    \widetilde{O}\Bigl(\frac{M\alpha}{\sqrt{\delta}\,\varepsilon} + \frac{S}{\sqrt{\delta}}\Bigr).
\end{equation}

\emph{Runtime of TPM.} To compute the TPM total cost, we simply substitute the expressions of $k_\est$ and $L_\est$ in \cref{eq:parameters} to the runtime in \cref{eq:general-runtime}. Including only the nonlogarithmic terms and ignoring constants, we obtain
\begin{equation}
    \widetilde O\bigl(MS+Mk L\bigr) = \widetilde{O} \Bigl(M \cdot S + M\frac{\norm{\psi}_1^2\alpha^2}{\delta \varepsilon^2} \cdot \frac{\alpha}{\varepsilon} \Bigr) = \widetilde{O} \Bigl(\frac{\norm{\psi}_1^2 M\alpha^3}{ \delta  \varepsilon^3} +SM\Bigr).
\end{equation}

\subsection{Ground state preparation}
\label{sec:dequantization-state}

For the runtime of the quantum algorithm, we use Corollary 9 in \cite{preparation_lin_2020}. For the reader's convenience, we reproduce that corollary below using our notation.

\begin{theorem}[{\cite[Corollary 9]{preparation_lin_2020}, paraphrased}]
\label{thm:lt-gs-preparation}
Suppose we have a Hamiltonian $H$ given through its $(\alpha,\lceil\log_2 M\rceil,0)$-block-encoding $U_H$ using the LCU method~\cite{algorithms_dalzell_2023}. Let $\lambda_0$ and $\lambda_1$ be its two smallest eigenvalues, and let $\ket{\psi_0}$ be a normalized ground state. Suppose we have a normalized initial state $\ket{\phi_0}$ prepared by a circuit $U_I$, together with known bounds $\abs{\ip{\phi_0}{\psi_0}}\geq\sqrt{\delta}>0$ and $\lambda_1-\lambda_0\geq\Delta>0$. Then an approximate ground state $\ket{\widetilde{\psi}_0}$ satisfying $\abs{\ip{\psi_0}{\widetilde{\psi}_0}}\geq1-\eta$ can be prepared with probability at least $1-\vartheta$ using\footnote{The condition $\abs{\ip{\psi_0}{\widetilde{\psi}_0}}\geq1-\eta$ implies $\abs{\ip{\psi_0}{\widetilde{\psi}_0}}^2\geq1-2\eta$, so by dividing $\eta$ by $2$, we are guaranteed that $\abs{\ip{\psi_0}{\widetilde{\psi}_0}}^2\geq1-\eta$, which is the standard one in this paper. This does not asymptotically change the query costs. We stated the theorem with  ``$\abs{\ip{\psi_0}{\widetilde{\psi}_0}}\geq1-\eta$'' simply because that is how it was stated in \cite{preparation_lin_2020}.}
    \begin{equation}
    \label{eq:preparation-queries-UH}
        O\!\left(
            \frac{\alpha}{\sqrt{\delta}\,\Delta}
            \left[
                \log\!\left(\frac{\alpha}{\Delta}\right)
                \log\!\left(\frac{1}{\sqrt{\delta}}\right)
                \log\!\left(
                    \frac{\log(\alpha/\Delta)}{\vartheta}
                \right)
                + \log\!\left(\frac{1}{\eta}\right)
            \right]
        \right)
    \end{equation}
    queries to $U_H$, and
    \begin{equation}
    \label{eq:preparation-queries-UI}
        O\!\left(
            \frac{1}{\sqrt{\delta}}
            \log\!\left(\frac{\alpha}{\Delta}\right)
            \log\!\left(
                \frac{\log(\alpha/\Delta)}{\vartheta}
            \right)
        \right)
    \end{equation}
    queries to $U_I$, alongside some ancillary qubits and one- and two-qubit gates. The nonquery costs are subdominant and will be hidden in the total runtime bound below using $\tilde O$ notation.
\end{theorem}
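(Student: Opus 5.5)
The plan is to reduce the corollary to \cref{thm:lt-gs} together with a filtering step, following the structure of \cite{preparation_lin_2020}. The preparation has two stages. First we locate a threshold $\mu$ strictly between the two smallest eigenvalues. Then we apply an approximate projector onto eigenvalues below $\mu$ to the initial state $\ket{\phi_0}$, and boost the success probability by amplitude amplification. All costs are counted as queries to $U_H$ and $U_I$. The $(\alpha,\lceil\log_2 M\rceil,0)$-block-encoding from LCU only enters through $H/\alpha$ having norm at most one. This is why every degree bound below is expressed in the normalized gap $\Delta/\alpha$.

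\emph{Step 1 (threshold).} Invoke \cref{thm:lt-gs} with accuracy $\varepsilon=\Delta/4$ and failure probability $\vartheta$. This produces an estimate $\widehat E$ with $\abs{\widehat E-\lambda_0}\leq\Delta/4$, so $\mu\coloneqq\widehat E+\Delta/2$ satisfies $\lambda_0+\Delta/4\leq\mu\leq\lambda_1-\Delta/4$. Substituting $\varepsilon=\Delta/4$ into \cref{eq:energy-queries-UH,eq:energy-queries-UI} gives exactly the triple-logarithmic term in \cref{eq:preparation-queries-UH} and the full bound \cref{eq:preparation-queries-UI}. So this step accounts for everything except the $\log(1/\eta)$ term.

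\emph{Step 2 (filter).} Using quantum signal processing or QSVT on $U_H$, implement a block-encoding of $p(H/\alpha)$. Here $p$ is a real polynomial with $\abs{p}\leq1$ on $[-1,1]$. We require $p\geq1-\eta'$ on $[-1,(\mu-\Delta/4)/\alpha]$ and $\abs{p}\leq\eta'$ on $[(\mu+\Delta/4)/\alpha,1]$. A standard polynomial approximation of the sign or erf function gives such a $p$ of degree $O((\alpha/\Delta)\log(1/\eta'))$. Applied to $\ket{\phi_0}$, the good branch is, up to normalization, close to $P_0\ket{\phi_0}$. Here $P_0$ projects onto the ground space, which is $\psi_0$ in the nondegenerate case. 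The branch has amplitude at least $\sqrt\delta(1-\eta')$, and its excited-state leakage has norm at most $\eta'$.

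\emph{Step 3 (amplification).} Use fixed-point amplitude amplification, or standard amplification with the known lower bound $\sqrt\delta$, with $O(1/\sqrt\delta)$ rounds. Each round makes one call to the filter and $O(1)$ calls to $U_I$. Choose $\eta'=\Theta(\eta\sqrt\delta)$, so that after renormalization the leakage relative to the signal is at most $\eta$. This gives $\abs{\ip{\psi_0}{\widetilde\psi_0}}\geq1-\eta$. The $U_H$ cost is then $O(\delta^{-1/2}(\alpha/\Delta)\log(1/(\eta\sqrt\delta)))$, which splits as the $\log(1/\eta)$ term plus a $\log(1/\sqrt\delta)$ contribution. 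That contribution is absorbed into the first bracketed term of \cref{eq:preparation-queries-UH}.

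The main obstacle is the error bookkeeping in Step 3. Amplitude amplification acts on an imperfect projector, so we must check two things. First, the $\eta'$ leakage is amplified at most by a factor $O(1/\sqrt\delta)$ relative to the signal. Second, the failure event of Step 1 (probability $\vartheta$) is the only source of failure. I would handle both by working with the exact reflection about $p(H/\alpha)U_I\ket{0}$ and bounding the deviation from the ideal projector in operator norm, then applying a triangle inequality across the $O(1/\sqrt\delta)$ rounds. The footnote conversion to squared fidelity is then immediate by replacing $\eta$ with $\eta/2$.
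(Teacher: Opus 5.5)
Your architecture is the right one. It is how \cite{preparation_lin_2020} obtain Corollary~9; the paper itself only cites the result, but \cref{sec:tomography} confirms the construction $U_{\text{tot}}=\mathrm{PROJ}(\mu,\Delta/(4\alpha),\gamma\epsilon)\cdot U_I$ followed by amplitude amplification. Step~1 with $\varepsilon=\Delta/4$ does produce the triple-logarithmic term and \cref{eq:preparation-queries-UI}, and the filter degree $O((\alpha/\Delta)\log(1/\eta'))$ is correct.

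The step that fails is your claim in Step~3 that $O(1/\sqrt\delta)$ rounds of amplification leave the failure of Step~1 as the only source of failure. The flagged amplitude $a=\norm{p(H/\alpha)\phi_0}_2$ is only known to lie in $[(1-\eta')\sqrt\delta,1]$. With standard amplification, a fixed number $j$ of rounds gives success probability $\sin^2((2j+1)\theta)$ with $\sin\theta=a$ unknown, and this can vanish for admissible $\theta$. Fixed-point amplification with $O(1/\sqrt\delta)$ rounds guarantees only a constant success probability. Pushing this stage's failure probability down to $\vartheta$ needs $O(\log(1/\vartheta)/\sqrt\delta)$ rounds, or $O(\log(1/\vartheta))$ heralded repetitions. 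Each round calls the degree-$O((\alpha/\Delta)\log(1/\eta'))$ filter, so you pick up a term of order $\frac{\alpha}{\sqrt\delta\,\Delta}\log\frac1\eta\log\frac1\vartheta$. \Cref{eq:preparation-queries-UH} does not dominate this term: when $\log(1/\eta)$ is large compared with the other logarithms, the ratio of the two grows like $\log(1/\vartheta)$. So, as written, your argument does not give the stated bound.

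The consistent reading is that $\vartheta$ controls only the threshold-location stage. The filtering stage is heralded and succeeds with constant probability; equivalently, it yields a unitary whose flagged branch has weight $\Omega(1)$. This is also what the construction of \cite{preparation_lin_2020} actually delivers, since it boosts the flag weight only to $\Omega(1)$ (see \cref{sec:tomography}). You should state this explicitly rather than claim the second stage never fails.

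Separately, the obstacle you single out is not one, and your proposed way of handling it would be lossy. Both standard and fixed-point amplification act inside the two-dimensional span of the flagged and unflagged parts of $U_{\text{tot}}\ket{0}$. The flagged component therefore stays exactly proportional to $p(H/\alpha)\ket{\phi_0}$ in every round, and the leakage-to-signal ratio is at most $\eta'/((1-\eta')\sqrt\delta)$ before and after amplification. This gives $1-\abs{\ip{\psi_0}{\widetilde\psi_0}}\leq(\eta')^2/(2(1-\eta')^2\delta)$. Hence $\eta'=\Theta(\sqrt{\eta\delta})$ already suffices, and your $\eta'=\Theta(\eta\sqrt\delta)$ is more than enough.

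By contrast, an operator-norm hybrid argument against an ideal projector gives only an $\ell_2$ error of order $\sqrt{\eta'/\delta}$ after $O(1/\sqrt\delta)$ rounds. The reason is that a unitary whose flagged block is $\eta'$-close to $P_0$ is in general only $O(\sqrt{\eta'})$-close to any unitary dilation of $P_0$. With your choice of $\eta'$ that error is not small enough. You could rescue it by shrinking $\eta'$ polynomially, which is harmless inside the logarithm, but the invariant-subspace argument makes this unnecessary.
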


\emph{Runtime of quantum algorithm.} The total runtime associated with $U_H$ is
\begin{equation}\label{eq:cost-UH-state}
    \widetilde{O}\Bigl(\frac{\alpha}{\sqrt{\delta}\,\Delta} \Bigl(\log \Bigl(\frac \alpha \Delta \Bigr)+\log \Bigl(\frac 1 \eta \Bigr)\Bigr)\Bigr)
    \cdot \widetilde{O}(M)
    = \widetilde{O}\Bigl(\frac{M\alpha}{\sqrt{\delta}\,\Delta}  \Bigl(\log \Bigl(\frac \alpha \Delta \Bigr)+\log \Bigl(\frac 1 \eta \Bigr)\Bigr)\Bigr),
\end{equation}
and that with $U_I$ is
\begin{equation}\label{eq:cost-UI-state}
    \widetilde{O}\Bigl(\frac{1}{\sqrt{\delta}} \log \Bigl(\frac{\alpha}{\Delta} \Bigr)\Bigr)
    \cdot \widetilde{O}(S)
    = \widetilde{O}\Bigl(\frac{S}{\sqrt{\delta}}\log \Bigl(\frac{\alpha}{\Delta} \Bigr)\Bigr).
\end{equation}
Adding \cref{eq:cost-UH-state} and \cref{eq:cost-UI-state}, we obtain the overall runtime bound\footnote{The factor of $\log(\alpha/\Delta)$ multiplying the second term can be suppressed since $\alpha/\Delta$ appears in the first term.}
\begin{equation}\label{eq:lt-cost-prep}
    \widetilde{O}\Bigl(\frac{M\alpha}{\sqrt{\delta}\,\Delta}  \log \Bigl(\frac 1 \eta \Bigr) + \frac{S}{\sqrt{\delta}}\Bigr).
\end{equation}

\textbf{Runtime of TPM.} Now, we simply substitute the expressions of $k_{\prep}$ and $L_{\prep}$ in \cref{eq:preparation-parameters} to the runtime in \cref{eq:general-runtime}. Here, we also bound $\gamma$ in terms of $\alpha$ and $\Delta$ by $\gamma \geq \Delta/(2\alpha)$. Including only the nonlogarithmic terms  and ignoring constants, we obtain
\begin{equation}\label{eq:tpm-prep}
    \widetilde O\bigl(MS+Mk L\bigr) = \widetilde{O} \Bigl(M \cdot S + M\frac{\norm{\psi}_1^2 \alpha^2}{\Delta^2} \Bigl(\frac{1}{\delta} + \frac{1}{\eta} \Bigr) \cdot \frac{\alpha}{\Delta} \Bigr) = \widetilde{O} \Bigl(\frac{\norm{\psi}_1^2 M\alpha^3}{ \Delta^3} \Bigl(\frac{1}{\delta} + \frac{1}{\eta} \Bigr) +SM\Bigr).
\end{equation}

As we mentioned in the introduction, the quantum algorithm's runtime in \cref{eq:lt-cost-prep} should not be directly compared with that of TPM \cref{eq:tpm-prep}. This is because the quantum algorithm outputs a quantum state but TPM outputs a classical description of that state. In the next subsection, we bound the cost of doing tomography on the quantum state so that the quantum algorithm also outputs its classical description. This allows for a fair comparison. 

\subsection{State tomography with no explicit dependence on dimension.}
\label{sec:tomography}

In this subsection, we bound the cost of quantum state tomography by $\tilde{O}(\norm{\psi}_1^2/\eta^{3/2})$, which we also show is tight up to polylogarithmic factors in the query model. While our result is not the hardest to derive from \cite{tomography_vanapeldoorn_2022}, it may be of independent interest since there is no explicit dependence on the dimension of the state $\psi$. Instead, the dimension is replaced by the parameter $\norm{\psi}_1$. Existing tomography bounds, including those in \cite{tomography_vanapeldoorn_2022}, are usually stated with explicit dimension dependence.  The main result of this subsection is \cref{thm:tomography-cost}.

Before proceeding with the theorem statement and proof, we need to make some remarks on the connection between the precise output of \cite{preparation_lin_2020} and the assumptions in \cite{tomography_vanapeldoorn_2022} which we use in \cref{thm:tomography-cost}. The latter assume controlled access to the state-preparation unitary $U$ and its inverse $U^\dagger$, where $U$ prepares the state $\ket{v}$ on which we do tomography. However, the output of \cite{preparation_lin_2020} is not quite a unitary that prepares $\ket{v}$, an approximation of the ground state. Instead, it outputs a unitary $U_{\text{tot}}$ that prepares $\ket{v_{\mathrm{tot}}}$ which is a state that includes ancillas:
\begin{equation}
    U_{\text{tot}}\ket{0}_\text{anc}\ket{0}_{\text{syst}}=\sqrt{p_{\text{in}}}\ket{0}_{\text{anc}}\ket{v}_{\text{syst}}+\sqrt{1-p_{\text{in}}}\ket{v_{\perp}}=\ket{v_{\mathrm{tot}}}.
\end{equation}
Here, $\ket{\cdot}_{\text{anc}}$ is a state on ancillas and $\ket{\cdot}_{\text{syst}}$ is a state on the system such as the ground state approximation $\ket{v}$ (in the notation of \cite{preparation_lin_2020}, $U_{\text{tot}}=\mathrm{PROJ}(\mu,\Delta/(4\alpha),\gamma\epsilon)\cdot U_I$). $\ket{v_{\perp}}$ is a residual piece on the ancillas together with the system that is orthogonal to the first component $\ket{0}_{\text{anc}}\ket{v}_{\text{sys}}$. In particular, it does not contain any states with ancillas in the $\ket{0}_{\text{anc}}$ state so  projecting onto the $\ket{0}_{\text{anc}}$ ancilla state returns the desired state $\ket{v}_{\text{syst}}$. The parameter $p_{\text{in}}=\Omega(\delta)$ can be increased via amplitude amplification to make the $\ket{0}_{\mathrm{anc}}$ measurement more probable. Ref.~\cite{preparation_lin_2020} boosts the success weight to $\Omega(1)$ leading to the $1/\sqrt{\delta}$ query complexity to $U_{\text{tot}}$. The subtlety in applying the tomography technique to the output of \cite{preparation_lin_2020} is that the ancilla measurement precludes controlled access to a unitary that directly prepares $\ket{v}$.

To fill in this gap, we instead keep the ancillas for tomography and we defer the projection onto the $\ket{0}_{\text{anc}}$ state until the end. $U_{\text{tot}}$ is a unitary produced by \cite{preparation_lin_2020} that we indeed have controlled access to (to be precise, we will use the amplified version of $U_{\text{tot}}$, which we also have controlled access to). Instead of $\ket{\psi}$ as the unknown unit vector in Theorem~\ref{thm:tomography-cost}, we instead use $\ket{0}_{\text{anc}}\ket{\psi}_{\text{syst}}$, which has the same sparsity as $\ket{\psi}$; it is simply the original ground state tensored with ancillas in the state $\ket{0}_{\text{anc}}$. We need to amplitude amplify the $p_{\mathrm{in}}$ to $p\geq 1-\eta/4$, so that, combined with $|\langle\psi|v\rangle|^2\geq 1-\eta/4$, we get
\begin{equation}
    |(_\text{syst}\bra{\psi}_{\text{anc}}\bra{0})\ket{v_{\text{tot}}^p}|^2\geq 1-\eta/2
\end{equation}
as required by Theorem~\ref{thm:tomography-cost}. Here, $\ket{v_{\text{tot}}^p}$ is $\ket{v_{\text{tot}}} $ with $p_{\text{in}}$ amplified to $p$. Replacing the constant-success amplification in \cite{preparation_lin_2020} by a fixed-point amplitude amplification to success weight at least $1-\eta/4$ increases the number of queries to $U_{\text{tot}}$ and $U_{\text{tot}}^\dagger$ by only an $\mathcal{O}(\log (1/\eta))$ factor~\cite{search_yoder_2014}. Then the output of tomography gives a state $\ket{\tilde{v}_{\text{tot}}}$ satisfying
\begin{equation}
    |(_\text{syst}\bra{\psi}_{\text{anc}}\bra{0})\ket{\tilde{v}_{\text{tot}}}|^2\geq 1-\eta
\end{equation}
The final step is to apply the projection. Since we have a classical encoding of the state, this amounts to retaining the states with ancillas in the $\ket{0}_{\text{anc}}$ state, and renormalizing.

Because the tomography steps below are essentially unchanged besides relabeling $\ket{v}\to \ket{v_{\text{tot}}^p}$, $\ket{\tilde{v}}\to \ket{\tilde{v}_{\text{tot}}}$, etc., we will simplify notation by dropping the ancillas, with the understanding that to be precise we would have to keep the ancillas and project at the end after tomography.

\begin{theorem}\label{thm:tomography-cost}
Let $0<\eta < 1$. Let $\ket{v}\in\C^d$ be a quantum state approximating an unknown quantum state $\ket{\psi}\in\C^d$ such that $\abs{\ip{\psi}{v}}^2\geq1-\eta/2$. Let $U$ be a unitary that prepares $\ket{v}$ from a fixed initial state, i.e., $U\ket{0} = \ket{v}$. 

Then there is a quantum algorithm that outputs the classical description of a
unit vector $\tilde v$ satisfying
$\abs{\ip{\psi}{\tilde v}}^2\geq1-\eta$ with probability at least
$2/3$ using at most\footnote{The success probability of $2/3$ here is chosen for convenience. For any $0<\delta'<1$, the algorithm can also achieve success probability at least $1-\delta'$ with only an extra factor of $O(\log(1/\delta'))$ by standard boosting techniques.}
\begin{equation}
    \widetilde{O}(\norm{\psi}_1^2/\eta^{3/2})
\end{equation}
controlled queries to $U$ and its inverse.

Conversely, if $0<\eta \leq 1/64$, then for every $\chi\geq4$, there exists
a $\ket{\psi}$ with $\norm{\psi}_1=\Theta(\sqrt\chi)$ for which any quantum algorithm outputting $\tilde v$ as above requires at least 
\begin{equation}
    \Omega(\chi/\eta^{3/2})
\end{equation}
controlled queries to $U$ and its inverse.
\end{theorem}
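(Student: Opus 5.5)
The upper bound will reduce to $\ell_\infty$-norm pure-state tomography from \cite{tomography_vanapeldoorn_2022}, followed by classical hard thresholding. Only the thresholding step sees $\norm{\psi}_1$, and this is what removes the explicit dependence on $d$.

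\textbf{Step 1 (amplitude estimates).} Using controlled access to $U$ and $U^\dagger$, we obtain a classical vector $\hat w$ with $\norm{\hat w-e^{i\varphi}v}_\infty\leq\epsilon$ for some global phase $\varphi$, with probability at least $2/3$. I will use the $\ell_\infty$ result of \cite{tomography_vanapeldoorn_2022}, taking its query cost as a black-box function of $\epsilon$, with only polylogarithmic dependence on $d$ absorbed into $\widetilde O$. This also requires controlled queries to $U^\dagger$, as the theorem permits.

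\textbf{Step 2 (thresholding).} Set $\hat w'_j=\hat w_j$ if $\abs{\hat w_j}\geq2\epsilon$ and $\hat w'_j=0$ otherwise, and output $\tilde v=\hat w'/\norm{\hat w'}_2$. Write $v=c\psi+e$ after fixing phases, with $\abs{c}^2\geq1-\eta/2$ and $\norm{e}_2^2\leq\eta/2$. I will bound $\norm{\hat w'-c\psi}_2$ coordinatewise in three classes.
\begin{enumerate}
\item On retained coordinates the error is at most $\abs{e_j}+\epsilon$. Each retained coordinate has $\abs{c\psi_j}+\abs{e_j}\geq\epsilon$, so there are at most $\norm{\psi}_1/\epsilon+\norm{e}_2^2/\epsilon^2$ of them.
\item Dropped coordinates with $\abs{e_j}\leq\epsilon$ have $\abs{\psi_j}\lesssim3\epsilon$. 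Their total squared mass is at most $3\epsilon\norm{\psi}_1$, by H\"older's inequality.
\item Dropped coordinates with $\abs{e_j}>\epsilon$ have $\abs{\psi_j}\lesssim4\abs{e_j}$, so they contribute $O(\norm{e}_2^2)$.
\end{enumerate}
Summing gives an $\ell_2$ error of order $\norm{e}_2+\sqrt{\epsilon\norm{\psi}_1}$.

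\textbf{Step 3 (constants and choice of $\epsilon$).} The conversion to infidelity must fit inside the budget between $\eta/2$ and $\eta$. This is the delicate part, because the contribution from $e$ enters with a constant factor that is not automatically below the available slack $\sqrt\eta-\sqrt{\eta/2}$. I would try to sharpen the case analysis so that $e$ enters only through $\norm{e}_2$ with coefficient $1$ plus lower-order terms. If that fails, I would bound the perturbation $e$ more crudely and pay for it in a smaller $\epsilon$. My target is to show that $\epsilon\approx\eta^{3/2}/\norm{\psi}_1^2$, or an equivalent precision in the normalization used in \cite{tomography_vanapeldoorn_2022}, suffices. Plugging this into the black-box cost of Step 1 should give the claimed $\widetilde O(\norm{\psi}_1^2/\eta^{3/2})$. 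Matching this exact rate to the precision-versus-cost tradeoff of \cite{tomography_vanapeldoorn_2022} is the main obstacle in the upper bound, and it is where I expect the most bookkeeping.

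\textbf{Step 4 (lower bound).} Embed a hard instance of dimension $\Theta(\chi)$ into $\C^d$, so that every admissible $\psi$ is supported on $\chi$ coordinates. Then $\norm{\psi}_1=\Theta(\sqrt\chi)$, for instance by taking near-uniform amplitudes. Any algorithm meeting our guarantee then solves pure-state tomography in dimension $\chi$ to infidelity $\eta$. I will invoke the query lower bound of \cite{tomography_vanapeldoorn_2022} for that problem, translating its error measure into infidelity, which is where the condition $\eta\leq1/64$ should enter. Alternatively, I would build a packing of $\exp(\Omega(\chi))$ states that are pairwise far in fidelity and apply the standard adversary or packing argument. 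The step to check carefully is that the lower-bound family still satisfies $\abs{\ip{\psi}{v}}^2\geq1-\eta/2$ for the state $v=U\ket0$ given to the algorithm. Taking $v=\psi$ makes this trivial.
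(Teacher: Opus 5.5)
\textbf{Upper bound.} The cost accounting in Steps 1 and 3 cannot work as planned. No $\ell_\infty$ tomography routine whose cost depends only on $\epsilon$ (and $\operatorname{polylog} d$) scales like $1/\epsilon$. The paper's own hard family $\psi_x$, taken with constant $a$ and $s\approx 1/\epsilon^2$, forces $\Omega(s/\theta)=\Omega(1/\epsilon^2)$ queries for $\ell_\infty$ accuracy $\epsilon$. Your Step 2 needs $\epsilon\lesssim\eta/\norm{\psi}_1$, so a black-box bound gives only $\widetilde O(\norm{\psi}_1^2/\eta^2)$. Your target $\epsilon\approx\eta^{3/2}/\norm{\psi}_1^2$ presupposes a $1/\epsilon$ cost that does not exist.

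The missing ingredient is the instance-dependent bound of \cite[Proposition 24]{tomography_vanapeldoorn_2022}, combined with $v\approx\psi$. That bound has cost $\widetilde O(\sqrt s/\epsilon+\sqrt P/\epsilon^2)$, where $s$ counts the coordinates of $v$ of magnitude at least $\epsilon$ and $P$ is the squared mass of the rest. Take $\epsilon=\tau=\eta/(16\norm{\psi}_1)$. The count you already did in item 1 gives $s=O(\norm{\psi}_1^2/\eta)$. Writing $\abs{v_j}^2\leq\abs{v_j}\abs{\psi_j}+\abs{v_j}\abs{(v-\psi)_j}$ and applying Cauchy--Schwarz gives $P\leq2\tau\norm{\psi}_1+\norm{v-\psi}_2^2=O(\eta)$. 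The bound $P=O(\eta)$ is exactly what saves the factor $\eta^{-1/2}$ over the black-box $1/\epsilon^2$.

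Separately, the constant problem you flag is real, and shrinking $\epsilon$ does not cure it. Take $\psi=\ket0$ and let $v$ spread its orthogonal weight $\eta/2$ over coordinates of magnitude exactly $\epsilon$. An admissible error of $+\epsilon$ on each lifts them to the threshold $2\epsilon$. Your output then has infidelity $2\eta/(1+3\eta/2)>\eta$ for $\eta<2/3$, whatever $\epsilon$ is; only the ratio of threshold to $\epsilon$ matters. The paper avoids case analysis altogether. It projects the estimate $z$ in Euclidean norm onto the $\ell_1$ ball of radius $\norm{\psi}_1$ and normalizes the result $w$. The obtuse-angle property of this projection gives $\norm{w-\psi}_2^2\leq\norm{v-\psi}_2^2+2\norm{z-v}_\infty\norm{w-\psi}_1\leq 2\eta/3+4\tau\norm{\psi}_1<\eta$, so $v-\psi$ enters with coefficient exactly one.

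\textbf{Lower bound.} Step 4 cannot reach $\Omega(\chi/\eta^{3/2})$. Suppose every admissible $\psi$ is supported on a fixed set of $\Theta(\chi)$ coordinates. Then the task is ordinary tomography in dimension $\Theta(\chi)$ to $\ell_2$ accuracy $\Theta(\sqrt\eta)$. This can be solved with $\widetilde O(\chi/\sqrt\eta)$ queries, e.g.\ by \cite[Proposition 24]{tomography_vanapeldoorn_2022} with $R$ the known support and $\epsilon=\sqrt{\eta/\chi}$. So no argument about such a family, whether a cited lower bound or a packing, can give more than that, and you fall short by a factor of $1/\eta$.

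The missing idea is to decouple the dimension from the $\ell_1$-norm by putting almost all the weight on one coordinate. The paper uses $\psi_x=\sqrt{1-a}\,\ket0+\sqrt{a/s}\sum_{j=1}^s x_j\ket j$ with $a=16\eta$ and $s=\lfloor\chi/(64\eta)\rfloor$. Then $\norm{\psi_x}_1=\Theta(\sqrt\chi)$, while the effective dimension is $\Theta(\chi/\eta)$. Infidelity at most $\eta$ forces recovery of at least an $11/12$ fraction of the signs; this is where $\eta\leq1/64$ enters, through $a\leq1/4$. Moreover, $U_x$ can be built from one controlled query each to the fractional phase oracle $O_{\theta,x}$ and its inverse, with $\theta=\arcsin\sqrt a=\Theta(\sqrt\eta)$. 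The $\Omega(s/\theta)$ sign-recovery bound then gives $\Omega(\chi/\eta^{3/2})$.
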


Note that the theorem's upper bound is nontrivial even if $\ket{v} = \ket{\psi}$. The lower bound also holds even if $\ket{v} = \ket{\psi}$, as the proof will show.  We prove the upper bound first and then the lower bound.

\begin{proof}[Proof of upper bound]
The main tool for establishing the upper bound is \cite[Proposition 24 in arXiv v1]{tomography_vanapeldoorn_2022}, restated below for the reader's convenience.

\begin{theorem}[{\cite[Proposition 24 in arXiv v1]{tomography_vanapeldoorn_2022}, paraphrased}]\label{thm:tomography-helper}
Let $\ket{v}\in \C^d$ be a
quantum state. Let $U$ be a unitary that prepares $\ket{v}$ from a fixed initial state, i.e., $U\ket{0}=\ket{v}$. Let
$0<\varepsilon,\delta<1$, and let $R$ be a nonempty subset of the
indices such that
\begin{equation}
    \{j\in[d]:\abs{v_j}\geq\varepsilon\}
    \subseteq R\subseteq[d].
\end{equation}
Write $P\coloneqq\sum_{j\notin R}\abs{v_j}^2$ and $s\coloneqq\abs{R}$. There is a quantum algorithm that, with probability at least $1-\delta$, outputs the classical description of an $O\bigl((s+P/\varepsilon^2)\log(s)\log(1/\delta)\bigr)$-sparse vector $z\in\C^d$ satisfying $\norm{z-v}_\infty\leq\varepsilon$ using
\begin{equation}
    O\left(
      \left(
        \frac{\sqrt{s}}{\varepsilon}
        +\frac{\sqrt{P}}{\varepsilon^2}
        +\log\frac{\log(s+P/\varepsilon)}{\delta}
      \right)
      \log(s)
      \log\left(\frac{s+P/\varepsilon}{\delta}\right)
    \right)
\end{equation}
controlled queries to $U$ and its inverse.
\end{theorem}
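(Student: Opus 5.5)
The plan is to output $z$ supported only on $R$. Any $j\notin R$ has $\abs{v_j}<\varepsilon$ by the hypothesis $\{j:\abs{v_j}\geq\varepsilon\}\subseteq R$, so setting $z_j=0$ off $R$ already gives $\abs{z_j-v_j}<\varepsilon$ there. The problem then reduces to estimating the $s$ complex numbers $v_j$, $j\in R$, each to additive error $\varepsilon$ (say $\varepsilon/2$ for the real and for the imaginary part) with overall failure probability $\delta$. This route gives an $s$-sparse output, which is within the claimed $O((s+P/\varepsilon^2)\log s\log(1/\delta))$ sparsity. The $\sqrt P/\varepsilon^2$ term in the query bound is only slack for this route; I expect it reflects a variant of the algorithm that does not know $R$ exactly and must find the heavy coordinates outside it by sampling.

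First, I would turn each amplitude into an expectation value. Add one control qubit, apply a Hadamard to it, and apply $U$ controlled on it. Then $\mathrm{Re}\,v_j$ and $\mathrm{Im}\,v_j$ become expectation values of bounded observables built from $\ket{j}\bra{0}$-type cross terms, as in a Hadamard test between $U\ket{0}$ and a reference branch. The reference branch must carry the basis state $\ket{j}$. To avoid needing a $j$-dependent reference, I would use a second register holding $\ket{j}$ in uniform superposition over $R$, prepared classically at no query cost. With that register, all $2s$ observables live on one state that costs $O(1)$ controlled queries to $U,U^\dagger$ to prepare, and each has norm at most $1$ up to an $O(\sqrt s)$ rescaling that must be tracked carefully.

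Second, I would estimate all $2s$ expectation values simultaneously to $\ell_\infty$ error $\varepsilon$. I would use a multivariate phase/gradient estimation routine, namely the Gily\'en--Arunachalam--Wiebe gradient method in the form of Huggins et al.\ for many expectation values. It costs $\widetilde O(\sqrt{s}/\varepsilon)$ queries to the state-preparation unitary and its inverse. The $\log(s)$ and $\log(1/\delta)$ factors in the bound come from its median-of-means amplification and a union bound over the $s$ coordinates.

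The main obstacle is the second step together with the normalization in the first. A naive Hadamard test or amplitude estimation per coordinate costs $s/\varepsilon$, not $\sqrt s/\varepsilon$. Worse, the superposition over $R$ rescales each cross term by $1/\sqrt s$, which would destroy the gain. My first attempt would be to define the observables so that the $1/\sqrt s$ factor is absorbed into the phase oracle's scale, using the fact that $\sum_{j\in R}\abs{v_j}^2\le 1$. The simultaneous estimator's error is controlled in the right norm precisely because this sum is at most $1$, so the rescaled cross terms need not each be resolved to accuracy $\varepsilon/\sqrt s$ independently. If that bookkeeping fails, I would fall back to the direct citation of \cite[Proposition 24]{tomography_vanapeldoorn_2022}, which is what the paper does.
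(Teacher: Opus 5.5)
The paper gives no proof of this statement. It imports it from Proposition~24 of van Apeldoorn et al., so your fallback is exactly what the paper does. Your own derivation, however, has two genuine gaps.

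First, your route takes $R$ as an input: you prepare a uniform superposition over $R$ and zero out $z$ off $R$. In the statement as the paper uses it, $R$ is only an analysis device. In the proof of \cref{thm:tomography-cost} it is $R=\{j:\abs{v_j}\geq\tau\}$ for the \emph{unknown} $v$, and the output sparsity $(s+P/\varepsilon^2)\log s$, rather than $s$, signals the same thing. So the algorithm must itself locate every coordinate with $\abs{v_j}\geq\varepsilon$. These all lie in $R$; $P$ is the light mass outside $R$ that dilutes the search. The $\sqrt P/\varepsilon^2$ term and the sparsity bound pay for this search, so they are not slack. Plain computational-basis sampling does not fit the budget: catching a coordinate of magnitude $\varepsilon$ needs $\Omega(1/\varepsilon^2)$ samples, whereas for, e.g., $s=2$ and $P=0$ the budget is $\widetilde O(1/\varepsilon)$. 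What is missing is something like amplitude-amplified sampling from the part of $v$ supported off the already-found indices. Heuristically, with $r$ heavy coordinates still unfound this costs about $\sqrt{r\varepsilon^2+P}/(r\varepsilon^2)$ queries per new heavy coordinate. Summed, that gives roughly $\sqrt s/\varepsilon+\sqrt P\log s/\varepsilon^2$ queries and $s+P\log s/\varepsilon^2$ samples, matching both bounds. Without such a step you prove only a known-$R$ version, which would not support the paper's application.

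Second, the step you flag as the main obstacle is not carried out, and the suggested fix is not an argument. With the reference register in uniform superposition over $R$, your cross terms have expectation $\mathrm{Re}\,v_j/\sqrt s$. The multiple-expectation-value estimator of Huggins et al.\ gives a per-observable additive guarantee, so you would need accuracy $\varepsilon/\sqrt s$ and pay $\widetilde O(s/\varepsilon)$. The bound $\sum_j\abs{v_j}^2\leq1$ does not change that guarantee; exploiting it would need a different phase-oracle construction, which you do not supply. The obstacle is self-inflicted, however, because the $j$-dependence can sit in the observable rather than in the reference state. Take $\ket{\phi}=\tfrac{1}{\sqrt2}(\ket{0}\ket{0}+\ket{1}\ket{v})$, prepared with one controlled query to $U$. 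Take $O_j\coloneqq\ket{0}\bra{1}\otimes\ket{0}\bra{j}+\ket{1}\bra{0}\otimes\ket{j}\bra{0}$, together with the analogue carrying phases $\mp i$ for the imaginary part. Then $\norm{O_j}=1$ and $\bra{\phi}O_j\ket{\phi}=\mathrm{Re}\,v_j$, so all $2s$ values follow to accuracy $\varepsilon/2$ from $\widetilde O(\sqrt s/\varepsilon)$ queries. That settles the known-$R$ case up to logarithmic factors, but the first gap remains.
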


Let $\tau \coloneqq \eta/(16\norm{\psi}_1)$. We use \cref{thm:tomography-helper} with
$\varepsilon=\tau$, since we will show that this guarantees the reconstructed state achieves squared fidelity at least $1-\eta$, with probability at least $2/3$ ($\delta = 1/3$). Under this choice, we will show that $s\in O(\norm{\psi}_1^2/\eta)$ and
$P\in O(\eta)$, so that a classical estimate $z$ satisfying $\norm{z-v}_\infty\leq\tau$ can be obtained using $\widetilde{O}(\norm{\psi}_1^2/\eta^{3/2})$ queries.

We first choose the global phase of $\psi$ so that
$\ip{\psi}{v}\geq0$ and define $e\coloneqq v-\psi$.
The fidelity assumption gives
\begin{equation}\label{eq:norm-e}
    \norm{e}_2^2
    =2-2\abs{\ip{\psi}{v}}
    \leq2-2\sqrt{1-\eta/2}
    \leq\frac{2\eta}{3}.
\end{equation}
Let $R=\{j \colon \abs{v_j}\geq\tau\}$ be the set of basis states of $v$ whose magnitudes are at least $\tau$. This set is nonempty, since otherwise $\abs{\ip{\psi}{v}}<\tau\norm{\psi}_1=\eta/16<\sqrt{1-\eta/2}$, contradicting the fidelity assumption. Every coordinate in $R$ satisfies either $\abs{\psi_j}\geq\tau/2$ or
$\abs{e_j}\geq\tau/2$, since $v=\psi+e$. Hence, in order to bound $s = |R|$, we define
$E=\{j\colon \abs{\psi_j}\geq\tau/2\}$ and $F=\{j\colon \abs{e_j}\geq\tau/2\}$. The cardinality of $E$ satisfies
\begin{equation}
    \abs{E}\frac{\tau}{2}
    \leq\sum_{j\in E}\abs{\psi_j}
    \leq\norm{\psi}_1
    \quad\Longrightarrow\quad
    \abs{E} \leq \frac{2\norm{\psi}_1}{\tau}
    =\frac{32\norm{\psi}_1^2}{\eta}.
\end{equation}
Similarly, using~\cref{eq:norm-e},
\begin{equation}
    \abs{F}\frac{\tau^2}{4}
    \leq\sum_{j\in F}\abs{e_j}^2
    \leq\norm{e}_2^2
    \leq\frac{2\eta}{3}
    \quad\Longrightarrow\quad
    \abs{F}\leq\frac{8\eta}{3\tau^2}
    =\frac{2048\norm{\psi}_1^2}{3\eta}.
\end{equation}
Since $R\subseteq E\cup F$, we obtain
\begin{equation}\label{eq:bigO-for-s}
    s=\abs{R}\leq\abs{E}+\abs{F}
    \leq\frac{32\norm{\psi}_1^2}{\eta}+\frac{2048\norm{\psi}_1^2}{3\eta}
    \leq O\Bigl(\frac{\norm{\psi}_1^2}{\eta}\Bigr).
\end{equation}

It remains to bound $P$. By the triangle inequality followed by the
Cauchy-Schwarz inequality,
\begin{equation}
    P
    =\sum_{j\notin R}\abs{v_j}^2\notag
    \leq\sum_{j\notin R}\abs{v_j}\abs{\psi_j}
      +\sum_{j\notin R}\abs{v_j}\abs{e_j}\notag
    \leq\tau\norm{\psi}_1
      +\Bigl(\sum_{j\notin R}\abs{v_j}^2\Bigr)^{1/2}
       \Bigl(\sum_{j\notin R}\abs{e_j}^2\Bigr)^{1/2}\notag
    \leq\tau\norm{\psi}_1+\sqrt P\,\norm{e}_2.
\end{equation}
Using $ab\leq(a^2+b^2)/2$ and~\cref{eq:norm-e} then gives
\begin{equation}\label{eq:bigO-for-P}
    P\leq\tau\norm{\psi}_1+\frac P2+\frac{\norm{e}_2^2}{2}
    \quad\Longrightarrow\quad
    P\leq2\tau\norm{\psi}_1+\norm{e}_2^2
    \leq\frac{\eta}{8}+\frac{2\eta}{3}
    =\frac{19\eta}{24} \leq O(\eta).
\end{equation}
Substituting~\cref{eq:bigO-for-s,eq:bigO-for-P} into
\cref{thm:tomography-helper} yields a sparse classical estimate $z$
satisfying $\norm{z-v}_\infty\leq\tau$ with probability at least
$2/3$, using a number of queries that is at most
\begin{equation}\label{eq:tomography-entrywise-cost}
    \widetilde{O}\Bigl(
      \frac{\sqrt{s}}{\tau}+\frac{\sqrt{P}}{\tau^2}
    \Bigr)
    =\widetilde{O}\Bigl(\frac{\norm{\psi}_1^2}{\eta^{3/2}}\Bigr).
\end{equation}

Now, let us translate the $\ell_\infty$ distance into the desired fidelity.
To obtain it, let $w$ be the Euclidean projection of $z$ onto
\begin{equation}
    C\coloneqq\{w\in\C^d:\norm{w}_1\leq\norm{\psi}_1\}.
\end{equation}
This means that $w$ is the vector in $C$ that is closest to $z$ in terms of the Euclidean distance. We will show that $\norm{w-\psi}_2^2 < \eta$ and that $\tilde v \coloneqq w/\norm{w}_2$ satisfies $\abs{\ip{\psi}{\tilde v}}^2 \geq 1-\eta$.

Notice that $C$ is closed, convex, and contains $\psi$. Hence, $w+t(\psi-w) \in C$ for all $t\in[0,1]$, and the distance to $z$ cannot decrease as $t$ increases. Therefore, we know that
\begin{equation}
    \frac{d}{dt} \left.\norm{(z-w)-t(\psi-w)}_2^2\right|_{t=0^+} =(-2\mathrm{Re}\ip{z-w}{\psi-w}+2t\norm{\psi-w}^2)_{t=0^+} \geq 0,
\end{equation}
which implies
\begin{equation}
    \operatorname{Re}\ip{z-w}{\psi-w}\leq0,
\end{equation}
and
\begin{equation}
    \norm{z-\psi}_2^2 = \norm{z-w}_2^2 + \norm{w-\psi}_2^2 -2\operatorname{Re}\ip{z-w}{\psi-w} \geq  \norm{z-w}_2^2 + \norm{w-\psi}_2^2.
\end{equation}

Writing $z-\psi = (v-\psi)+(z-v)$ and $z-w = (v-w) +(z-v)$, we obtain
\begin{align}
    \norm{w-\psi}_2^2
    &\leq \norm{v-\psi}_2^2-\norm{v-w}_2^2
      +2\operatorname{Re}\ip{z-v}{w-\psi}\notag\\
      &\leq\norm{v-\psi}_2^2
      +2\operatorname{Re}\ip{z-v}{w-\psi}\leq\norm{e}_2^2
      +2\norm{z-v}_\infty\norm{w-\psi}_1\notag\\
    &\leq\frac{2\eta}{3}
      +2\tau\bigl(\norm{w}_1+\norm{\psi}_1\bigr) \leq\frac{2\eta}{3}+4\tau\norm{\psi}_1
      =\frac{11\eta}{12}<\eta.
    \label{eq:tomography-projection}
\end{align}
Since $\psi$ is a unit vector and $\eta<1$, this implies $w\ne0$.
We can therefore normalize $w$ and return
$\tilde v \coloneqq w/\norm{w}_2$. The resulting fidelity satisfies
\begin{equation}\label{eq:tomography-fidelity}
    1-\abs{\ip{\psi}{\tilde v}}^2
    =\min_{c\in\C}\norm{\psi-cw}_2^2
    \leq\norm{\psi-w}_2^2<\eta.
\end{equation}
The projection and normalization are classical operations and require
no additional queries. Moreover, the projection cannot introduce a
nonzero coordinate outside the support of $z$, so the final estimate
can still be returned as a sparse vector. This proves the upper bound.
\end{proof}

\begin{proof}[Proof of lower bound.] Our approach for lower bounding the tomography cost is based on the proof of \cite[Lemma 51 in arXiv v1]{tomography_vanapeldoorn_2022}, slightly paraphrased below. In particular, we assume $x\in \{-1,1\}^d$, whereas \cite{tomography_vanapeldoorn_2022} assumes $x\in \{0,1\}^d$. However, this only changes  the lower bound by a factor of $2$ since an oracle encoding signs can be implemented by one encoding bits by multiplying $\epsilon$ by $2$.

\begin{lemma}[{from proof of \cite[Lemma 51 in  arXiv v1]{tomography_vanapeldoorn_2022}, paraphrased}]\label{lem:bit-recovery}
For $x\in\{-1,1\}^d$ and $\epsilon>0$, let the $\epsilon$-fractional phase oracle of $x$ be
\begin{equation}
    O_{\epsilon,x} \colon \ket{j} \mapsto e^{i\varepsilon x_j}\ket{j}, \quad \text{$j \in [d]$.}
\end{equation}
Then, recovering more than $(3/4)$-fraction of the signs of $x$ with probability at least $2/3$ requires $\Omega(d/\varepsilon)$ controlled queries to $O_{\epsilon,x}$ and its inverse in the worst case.
\end{lemma}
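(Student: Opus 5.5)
The plan is to prove \cref{lem:bit-recovery} with a quantum adversary argument in which every $\varepsilon$-fractional query counts as only an $O(\varepsilon)$ fraction of a full query. First we make the oracle model concrete. A controlled query to $O_{\varepsilon,x}$ acts as $\ket{b}\ket{j}\mapsto e^{ib\varepsilon x_j}\ket{b}\ket{j}$ for $b\in\{0,1\}$. We can write
\begin{equation}
  O_{\varepsilon,x}=\cos\varepsilon\, I+i\sin\varepsilon\, D_x,
  \qquad D_x\coloneqq\operatorname{diag}(x),
\end{equation}
and the inverse has the same form. For two strings $x,y$, the difference $O_{\varepsilon,x}-O_{\varepsilon,y}$ is supported only on the coordinates $j$ with $x_j\neq y_j$, where it has magnitude $\abs{e^{i\varepsilon}-e^{-i\varepsilon}}=2\sin\varepsilon\leq 2\varepsilon$. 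The same holds in the controlled version, where the difference lives on the $b=1$ block. This is the only place $\varepsilon$ enters.

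Next, fix a $T$-query algorithm with states $\ket{\phi^t_x}$ after $t$ queries. We use the standard spectral adversary progress measure
\begin{equation}
  W^t\coloneqq\sum_{x,y}\Gamma_{x,y}\,\ip{\phi^t_x}{\phi^t_y},
\end{equation}
where $\Gamma$ is a symmetric nonnegative adversary matrix on $\{-1,1\}^d$. The usual computation bounds the change in one query by the spectral norms of the matrices $\Gamma\circ\Delta_j$, where $(\Delta_j)_{x,y}=[x_j\neq y_j]$. By the previous step this bound picks up the extra factor $2\sin\varepsilon$, so
\begin{equation}
  \abs{W^{t+1}-W^t}\leq 2\sin\varepsilon\cdot\max_j\norm{\Gamma\circ\Delta_j}\cdot\norm{\Gamma}^{-1}W^0 .
\end{equation}
This is exactly the fractional-query version of the adversary bound. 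Consequently, any lower bound of $\Omega(d)$ that the adversary method gives for standard queries becomes $\Omega(d/\varepsilon)$ for $\varepsilon$-fractional queries, because $\sin\varepsilon\leq\varepsilon$.

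It remains to exhibit an adversary bound of order $d$ for the relational task "output $\tilde x$ agreeing with $x$ on more than $3d/4$ coordinates with probability $2/3$." The candidate is the Hamming-distance matrix $\Gamma_{x,y}=[\,\abs{x\oplus y}=r\,]$ with $r$ a small constant fraction of $d$, for instance $r=d/10$. Such pairs can never share a valid answer with high confidence, since a string within distance $d/4$ of both is impossible once the success guarantee is boosted so that answers must be much closer. Boosting is cheap here: repeat the algorithm and take a coordinatewise majority, or instead lower the threshold. These matrices are diagonalized by the Fourier characters (Krawtchouk polynomials), so both $\norm{\Gamma}$ and $\norm{\Gamma\circ\Delta_j}$ can be computed explicitly. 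The ratio should come out as $\Theta(d)$, mirroring Ambainis's $\Omega(d)$ bound for learning most bits of $x$ with standard queries.

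The main obstacle is this final step. The problem is relational rather than a function, so the output condition of the adversary method must be phrased with the "no common good answer" criterion, using the state-conversion or relational form of the general adversary bound. We also need the Krawtchouk spectral estimates to give a ratio that is genuinely linear in $d$ rather than $\sqrt d$.

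If that computation becomes unwieldy, the fallback is a reduction. By the fractional-query simulation of Cleve et al., $T$ queries to $O_{\varepsilon,x}$ can be simulated with $O(\varepsilon T\log(\varepsilon T))$ standard queries. Combining this with the known $\Omega(d)$ lower bound for standard queries gives $\Omega(d/(\varepsilon\log d))$. That is already enough up to the polylogarithmic factors the paper tolerates in \cref{thm:tomography-cost}.
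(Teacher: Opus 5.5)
The paper does not prove this lemma; it imports it from the proof of Lemma~51 of van Apeldoorn et al., adding only that passing from $\{0,1\}$- to $\{-1,1\}$-strings costs a factor $2$ in $\varepsilon$. Your argument therefore has to stand on its own, and its main route has a genuine gap at the step you flagged. That step cannot be completed at all. Your fractional-query bookkeeping is fine: each controlled query changes $\ip{\phi_x}{\phi_y}$ only on coordinates with $x_j\neq y_j$, with a factor $2\sin\varepsilon$, so a positive-weight adversary ratio would transfer as ratio$/\varepsilon$. The problem is the ratio itself.

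For the relation ``agree with $x$ on more than $3d/4$ coordinates,'' $x$ and $y$ have disjoint valid-output sets only when $\abs{x\oplus y}\gtrsim d/2$. At $r=d/10$ they share valid outputs; for example, $x$ itself is valid for $y$. Your boosting does not fix this. Coordinatewise majority over runs does not raise the agreement fraction, since an algorithm that always errs on the same $\lfloor d/4\rfloor-1$ coordinates satisfies the hypothesis. Changing the threshold is also not available, because the lemma fixes the $3/4$ criterion.

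More fundamentally, take any nonnegative $\Gamma$ supported on pairs with $\abs{x\oplus y}\geq r$. Then $\sum_j\Gamma\circ D_j=\Gamma\circ(\abs{x\oplus y})_{x,y}\geq r\,\Gamma$ entrywise. Spectral norm is monotone on nonnegative matrices, so $d\max_j\norm{\Gamma\circ D_j}\geq r\norm{\Gamma}$. Hence the ratio is at most $d/r$. For your matrix it equals $\binom{d}{r}/\binom{d-1}{r-1}=d/r=10$, and for every admissible $\Gamma$ it is at most about $2$. No Krawtchouk computation can give $\Theta(d)$. The $\Omega(d)$ positive-adversary bound with hypercube edges ($r=1$) is for parity or exact learning, not for learning most bits. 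The negative-weight or relational adversary you mention is never specified, so the main route does not prove the lemma.

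Your fallback via fractional-query simulation is plausible, since LCU-based simulations handle $O_{\varepsilon,x}^{\pm1}=\cos\varepsilon\,I\pm i\sin\varepsilon\,D_x$. However, it yields only $\Omega(d/(\varepsilon\log d))$. That is weaker than the stated $\Omega(d/\varepsilon)$ and would weaken the $\Omega(\chi/\eta^{3/2})$ lower bound of \cref{thm:tomography-cost}, which is stated without logarithmic slack.

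A self-contained fix is a degree-truncated polynomial method.
\begin{itemize}
\item Write each controlled query, or its inverse, as $A+B_x$. Here $A$ is $x$-independent with $\norm{A}\leq1$, and $B_x=\pm i\sin\varepsilon\,\ket{1}\bra{1}\otimes D_x$ is linear in $x$.
\item Expand the $T$-fold product. The terms containing more than $K\geq2e\varepsilon T$ factors $B_x$ have total norm at most $\sum_{k>K}\binom{T}{k}\sin^k\varepsilon\leq2^{-K}$.
\item The remaining state $\tilde\phi_x=\sum_{\abs{S}\leq K}\chi_S(x)w_S$ has multilinear amplitudes of degree at most $K$.
\item By Cauchy--Schwarz and Parseval, its average probability of outputting a string within distance $d/4$ of $x$ is at most $2^{-d}\sum_{i\leq K}\binom{d}{i}\sum_{i<d/4}\binom{d}{i}\,(1+2^{-K})^2$.
\item Success probability $2/3$ then forces $K=\Omega(d)$, that is, $T=\Omega(d/\varepsilon)$.
\end{itemize}
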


The proof proceeds by reduction from a sign-recovery problem. We consider a family of states $\ket{\psi_x}$, indexed by $x \in \{-1, 1\}^s$, all with the same $\ell_1$-norm. In each state, one basis state carries a fixed, dominant amplitude, and the remaining weight is spread evenly over $s$ other basis states whose relative signs are given by $x$. This structure means that a sufficiently accurate estimate of a state in the family determines all but a small fraction of the entries of $x$. By choosing the dominant amplitude appropriately, we show that any estimate that meets the target fidelity must recover more than $(3/4)$-fraction of the signs. This places us in the regime where \cref{lem:bit-recovery} applies. Finally, we show each $\ket{\psi_x}$ can be prepared using a constant number of queries to a fractional phase oracle encoding $x$, so the lower bound in \cref{lem:bit-recovery} becomes a lower bound on the cost of tomography.

Now, let us begin the actual proof. Define $a\coloneqq16\eta$ and $s\coloneqq\lfloor\chi/(64\eta)\rfloor$. Consider the following family of states specified by a string of signs $x\in\{-1,1\}^s$,
\begin{equation}\label{eq:tomography-hard-family}
    \ket{\psi_x}
    =\sqrt{1-a}\,\ket{0}
      +\sqrt{\frac as}\sum_{j=1}^s x_j\ket{j},
\end{equation}
in which all states in the family satisfy
\begin{equation}
    \norm{\psi_x}_1
    =\sqrt{1-a}+\sqrt{as}
    \leq1+\frac{\sqrt\chi}{2}
    \leq\sqrt\chi.
\end{equation}
Note that the last inequality uses $\chi\geq4$. Moreover, $s\geq\chi/(128\eta)$, so $\norm{\psi_x}_1\geq\sqrt{\chi/8}$ and hence $\norm{\psi_x}_1=\Theta(\sqrt\chi)$.

Suppose that the unknown state is $\ket{\psi_x}$ for some $x\in\{-1,1\}^s$. We will show that even if $U_x$ prepares $\ket{\psi_x}$ exactly (i.e., $U\ket{0} = \ket{\psi_x}$), obtaining the classical description of a unit vector $\tilde v$ as in the theorem requires $\Omega(\chi/\eta^{3/2})$ controlled queries to $U_x$ and its inverse. 

By assumption, we have $\abs{\ip{\tilde v}{\psi_x}}^2\geq 1-\eta$ with probability at least $2/3$. We now show that from $\tilde v$, we could recover at least $(3/4)$-fraction of the signs of $x$ with probability at least $2/3$. To that end, let $\psi_y$ be the state in the family that has the largest absolute overlap with $\tilde v$. Once we have $\tilde v$, we could obtain the bitstring $y$ without any further queries to $U_x$. We now show that at least $(3/4)$-fraction of the signs in $y$ must agree with $x$.

We first bound the overlap between $\psi_x$ and $\psi_y$ using trace distance, which we denote by $D$:
\begin{equation}
    D(u,w) \coloneqq \sqrt{1-\abs{\ip{u}{w}}^2}.
\end{equation}

The fidelity between $\tilde v$ and $\psi_x$ implies $D(\tilde v, \psi_x) \leq \sqrt \eta$. Since $\psi_y$ is the state in the family that is closest to $\tilde v$, we have $D(\tilde v, \psi_y) \leq D(\tilde v, \psi_x) \leq \sqrt \eta$. Applying the triangle inequality to the two distances, we obtain
\begin{equation} \label{eq:dist-ineq}
    D(\psi_x,\psi_y) \leq D(\tilde v, \psi_x) +D(\tilde v, \psi_y) \leq 2\sqrt \eta.
\end{equation}

Now, let us find the maximum number of sign mismatches between $x$ and $y$ permitted by \cref{eq:dist-ineq}. If $h$ signs differ between $x$ and $y$, then
$\ip{\psi_x}{\psi_y}=1-2ah/s$, so
\begin{equation}\label{eq:tomography-sign-errors}
    4\eta\geq D(\psi_x,\psi_y)^2
    =\frac{4ah}{s}\Bigl(1-\frac{ah}{s}\Bigr)
    \geq\frac{3ah}{s}.
\end{equation}
Here, the last inequality follows from $h\leq s$ and $a=16\eta\leq1/4$. Thus $h\leq s/12$, and successful tomography must recover at least $(11/12)$-fraction of the signs. Since $11/12$ is at least $3/4$, \cref{lem:bit-recovery} shows that, for any $\epsilon>0$, obtaining $\tilde v$ requires at least $\Omega(s/\epsilon)$ controlled queries to $O_{\epsilon,x}$ and its inverse. To convert this to a lower bound on controlled queries to $U$ and its inverse, it suffices to show that for $\epsilon = \Theta(\sqrt{a})$, the unitary $U_x$ can be implemented using one controlled query to $O_{\epsilon,x}$ and one controlled query to its inverse. We proceed to show how this can be done.

Write $\theta\coloneqq \arcsin \sqrt a $, which  is well defined since $\sqrt{a} = 4\sqrt{\eta} \leq 1$. Write $O_x$ for $O_{\theta,x}$ for convenience, recall that this means
\begin{equation}
    O_x \ket{j}=e^{i\theta x_j}\ket{j}.
\end{equation}

We construct $U_x$ satisfying
$U_x\ket{0}=\ket{\psi_x}$ as follows.
\begin{enumerate}
    \item Start with the state
    \begin{equation}
        \ket{u}\ket{+} = \frac{1}{\sqrt2}( \ket{u}\ket{0}+ \ket{u}\ket{1}), \quad \text{where} \quad \ket{u}=\frac1{\sqrt s}\sum_{j=1}^s\ket{j}.
    \end{equation}
    \item Apply $O_x$ when the ancilla is $0$
    and $O_x^\dagger$ when it is $1$. These two controlled queries
    produce
    \begin{equation}
        \frac1{\sqrt{2s}}\sum_{j=1}^s\ket{j}
        \bigl(e^{i\theta x_j}\ket{0}
              +e^{-i\theta x_j}\ket{1}\bigr).
    \end{equation}
        \item Apply a Hadamard gate followed by
    $\operatorname{diag}(1,-i)$ to the ancilla. Using $\cos(\theta x_j)=\cos\theta$ and
    $\sin(\theta x_j)=x_j\sin\theta$, this gives
    \begin{equation}\label{eq:tomography-weak-preparation}
        \cos\theta\,\ket{u}\ket{0}
        +\frac{\sin\theta}{\sqrt s}
          \sum_{j=1}^s x_j\ket{j}\ket{1}.
    \end{equation}
\end{enumerate}

The vectors $\ket{u}\ket{0}$ and
$\ket{j}\ket{1}$, $j=1,\ldots,s$, are orthonormal.
Hence there is a fixed unitary, independent of $x$, mapping them
to $\ket{0}$ and $\ket{j}$, respectively.
Since $\cos\theta=\sqrt{1-a}$ and $\sin\theta=\sqrt a$,
applying this unitary to
\eqref{eq:tomography-weak-preparation} produces
\begin{equation}
    \sqrt{1-a}\,\ket{0}
    +\sqrt{\frac as}\sum_{j=1}^s x_j\ket{j}
    =\ket{\psi_x}.
\end{equation}

Therefore, each query to $U_x$ can be implemented using one controlled query to $O_{\epsilon,x}$ and one controlled query to its inverse, as claimed. Therefore, \cref{lem:bit-recovery} yields a lower bound of $\Omega(s/\theta)$ on the number of controlled queries to $U_x$ and its inverse required for the tomography task. But $\theta = \arcsin \sqrt{a} = \arcsin(4\sqrt{\eta}) = \Theta(\sqrt{\eta})$ and $s = \Theta(\chi/\eta)$, so this translates into a lower bound of $\Omega(\chi/\eta^{3/2})$, as required.
\end{proof}

%%%%%%%%%%%%%%%%%%%%%%
\section{Discussion}
\label{sec:discussion}
%%%%%%%%%%%%%%%%%%%%%%

In this work, we showed that the GLHP can be dequantized by the TPM when the ground state is sparse, meaning it has polynomially bounded $\ell_1$-norm in the computational basis. There are several directions for future work. 

First, there may be room for polynomial improvements of TPM. One direction is to use Chebyshev polynomials rather than just powers of $A$. This method is known to optimally amplify the gap, and was used in \cite{preparation_lin_2020}. But it is unclear how to optimally implement this together with the truncation without incurring too much error. Another direction is to specialize to frustration-free (FF) Hamiltonians, where the Hamiltonian admits a local decomposition $H=\sum_iH_i$ such that the ground state simultaneously minimizes each term $H_i$. Note that a FF Hamiltonian may not be commuting, and may even be gapless. One might expect a speed-up at least of the ground state preparation protocol for FF Hamiltonians because local patches can be steered toward the true ground state in parallel. Indeed, \cite{renormalization_arad_2017} (mentioned again below) found better bounds in the FF case compared to general gapped Hamiltonians. The idea of steering toward the ground state in parallel on local patches is in essence the trick behind the ground state projector, a tool used for tightening various bounds for FF Hamiltonians~\cite{detectability_aharonov_2009,detectability_aharonov_2011,detectability_anshu_2016}. The difficulty in applying an ``in parallel'' steering of the state to the ground state on local patches is that even though $A$ is sparse, $\prod_{i\in \text{layer}}(\alpha_i-H_i)$ implementing a single layer of shifted local disjoint Hamiltonian terms (similar in structure to a layer of the approximate ground state projector) is not in general polynomially sparse. It would be interesting to try to find an efficient encoding of the state to handle this case, at least if we assume that the Hamiltonian is also translation invariant.

More generally, it would be useful to find efficient classical algorithms for GLHP that depend on more stable, local basis-independent measures of complexity. A generic local Hamiltonian has a ground state with  $\|\psi\|_1^2\sim\mathcal{O}(2^n)$, even if it is gapped and its ground state is area-law entangled. In fact, $\log\|\psi\|_1^2$ upper bounds the bipartite entanglement entropy. This is because the computational basis naturally gives a basis of states that is separable between any region $A$ and its complement $\bar{A}$, but the Schmidt decomposition used to calculate entanglement entropy optimizes this basis. Writing $\psi$ in the computational basis with coefficients $C_{x,y}$ gives
\begin{equation}\label{psidecomp}
    \ket{\psi} = \sum_{x,y}C_{x,y}\ket{x}_A\ket{y}_{\bar{A}}\to \log \|\psi\|_1^2=2\log\|\psi\|_1=2\log\sum_{x,y}|C_{xy}|.
\end{equation}

On the other hand, the Schmidt coefficients $s_i$ are the singular values of the coefficient matrix $C$, and the entanglement entropy is $S(A)=-\sum_is_i^2\log s_i^2$ since $s_i^2$ are the eigenvalues of the reduced density matrix $\rho_A$. Importantly, the R\'enyi entropies decrease with index, 
\begin{equation}
    S(A)=S_1(A)\leq S_{1/2}(A)=2\log\sum_is_i.
\end{equation}

Intuitively, a smaller R\'enyi index gives more weight to small eigenvalues, leading to a more spread-out distribution. Schmidt decomposition optimizes the decomposition between $A$ and $\bar{A}$ in the sense that
\begin{equation}\label{mindecomp}
    \sum_is_i=\min_{\ket{\psi} = \sum_jc_j\ket{a_j}_A\ket{b_j}_{\bar{A}}}\sum_j|c_j|
\end{equation}
for normalized states $\ket{a_j}_A$ supported on $A$ and $\ket{b_j}_{\bar{A}}$ supported on its complement. Note that $\ket{a_j}_A$ and $\ket{b_j}_{\bar{A}}$ do not need to be computational basis states or product states, allowing for much more flexibility in the minimization. Combining (\ref{psidecomp}) with (\ref{mindecomp}) and R\'enyi entropy monotonicity gives
\begin{equation}
    S(A)\leq \log \|\psi\|_1^2.
\end{equation}

Therefore, sparsity in the computational basis also implies small entanglement entropy. On the other hand, small entanglement entropy does not generally mean $\|\psi\|_1^2$ is small. For example, a product state in the $X$-basis has $\|\psi\|_1^2=\|\psi\|_0=2^n$ in the $Z$ computational basis.

Classical algorithms whose runtimes depend on entanglement entropy typically use ideas in the tensor-network framework such as DMRG \cite{renormalization_white_1992,renormalization_schollwock_2011}. But it is notoriously difficult to obtain rigorous performance guarantees on DMRG, especially beyond one spatial dimension, even when we know that the ground state should have an efficient tensor network representation. Refs.~\cite{polynomial_landau_2015,renormalization_arad_2017} obtained rigorous results in one dimension using a different approximate ground-state projector approach, but these guarantees rely on one-dimensional structure and do not directly extend to general higher-dimensional systems. The benefit of the analysis in this work is that we can rigorously upper bound, independent of spatial dimension, the resources necessary to get within some error in ground state energy estimation and preparation. It would be interesting to obtain analogous rigorous bounds for algorithms based on entanglement entropy or other more local basis-independent measures of complexity, to dequantize the GLHP and similar problems in those settings.

%%%%%%%%%%%%%%%%%%%%%%%%%%
\section*{Acknowledgments}
%%%%%%%%%%%%%%%%%%%%%%%%%%

We thank Sergey Bravyi and David Gosset for informing us of the TPM algorithm in \cite{accuracy_kirby_2026}. We thank Mona Berciu and Chen Greif for insightful discussions. S.L.C. was supported by the Work Learn International Undergraduate Research Award (WLIURA) through UBC Computer Science, the Quantum Pathways program of the UBC Blusson Quantum Matter Institute, and the QuantumBC CREATE Undergraduate Summer Research Top Up. D.W. is supported by NSERC Grants CRC-2023-00039, RGPIN-2024-06493, and DGECR-2024-00113. C.Z. is supported by NSERC grants RGPIN-2026-07143 and DGECR-2026-00535.

\textbf{AI disclosure.} The authors conjectured that the convergence of TPM could be given in terms of the $\ell_1$-norm of the true ground state instead of the $\ell_0$-norm. We also conjectured that the complexity of the estimation problem could be made independent of the spectral gap. Given \cite{accuracy_kirby_2026} as reference, GPT-5.6 Sol (and above) proved our conjectures, which became the basis of our main theorems (\cref{thm:estimation,thm:preparation}). We verified these proofs and then independently wrote them up. In the process, we improved their presentations, in particular, by making them more modular and similar to each other. The authors also formulated the state tomography task in \cref{thm:tomography-cost} and conjectured that it could be performed using $\poly(\norm{\psi}_1)/\poly(\eta)$ queries to the state preparation oracle. Given \cite{tomography_vanapeldoorn_2022} as reference, GPT-6 Astra gave an initial proof of this conjecture, which we digested and rewrote in human-readable form.

\bibliographystyle{unsrt} % Number references in order of citation
\bibliography{references} % Loads references.bib

\appendix
\crefalias{section}{appendix}

\newpage

%%%%%%%%%%%%%%%%%%%%%%%%%%%%%%%%%%%%%%%
\section{Proof of the contraction rate}
\label{app:preparation-contraction}
%%%%%%%%%%%%%%%%%%%%%%%%%%%%%%%%%%%%%%%

For convenience, we reproduce the contraction rate lemma below.
\contractionrate*

\begin{proof}[Proof of \cref{lem:preparation-recurrence}]
For $x\in[0,1/\sqrt\delta]$,
the assumption $\rho\leq C\gamma\sqrt\delta$ gives
\begin{equation}\label{eq:contraction-domain}
  \rho^2+\rho(1-\gamma)x
  \leq C^2\gamma^2\delta+C\gamma(1-\gamma)
  \leq(C^2+C)\gamma
  \leq\frac{5\gamma}{16}<1.
\end{equation}
Hence $f$ is well defined throughout $[0,1/\sqrt\delta]$. We now prove the lemma in three steps. 

\emph{Existence of a fixed point in $(0,2\rho/\gamma)$.}
We have $f(0)=\sqrt2\rho/(1-\rho^2)>0$. Set $r\coloneqq2\rho/\gamma$.
Since $r\leq2C\sqrt\delta\leq\sqrt\delta/2\leq1/\sqrt\delta$,
the denominator is positive at $r$, and hence
\begin{equation}
  f(r)<r
  \quad\Longleftrightarrow\quad
  2\rho^2\Bigl(1+\frac{2(1-\gamma)}{\gamma}\Bigr)
  <(2-\sqrt2)\gamma.
\end{equation}
The left-hand side is at most $6C^2\gamma$, with
$6C^2\leq3/8<2-\sqrt2$. Thus $f(r)<r$, and continuity gives a
fixed point $\xi_\star\in(0,r)$. Moreover,
\begin{equation}\label{eq:preparation-derivative}
  f'(x)=
  \frac{(1-\gamma)+\rho^2(1-\gamma)(\sqrt2-1)}
       {\bigl(1-\rho^2-\rho(1-\gamma)x\bigr)^2}
  \geq0.
\end{equation}
Thus $f$ is nondecreasing; the third step will prove uniqueness.

\emph{Convergence of $\xi_t$ to $\xi_\star$.}
The denominator is positive at $1/\sqrt\delta$, so
\begin{equation}
  f(1/\sqrt\delta)<1/\sqrt\delta
  \quad\Longleftrightarrow\quad
  \rho\bigl(\sqrt2\delta+1-\gamma+\rho\sqrt\delta\bigr)
  <\gamma\sqrt\delta.
\end{equation}
The parenthesized expression is less than $4$, so the left-hand side is less than $4C\gamma\sqrt\delta\leq\gamma\sqrt\delta$. Thus $f(1/\sqrt\delta)<1/\sqrt\delta$; together with $f(0)>0$ and monotonicity, this gives $f([0,1/\sqrt\delta])\subseteq[0,1/\sqrt\delta]$. Every $\xi_t$ therefore lies in $[0,1/\sqrt\delta]$. The sequence is nondecreasing if $\xi_1\geq\xi_0$ and nonincreasing otherwise. It is bounded, so it converges. By continuity, its limit is a fixed point; the third step identifies this limit as $\xi_\star$.

\emph{Bounding the rate of convergence.}
By \cref{eq:contraction-domain,eq:preparation-derivative}, for
$x\in[0,1/\sqrt\delta]$ we have
\begin{equation}\label{eq:contraction-derivative-bound}
  0\leq f'(x)
  \leq\frac{(1-\gamma)(1+\gamma^2/16)}{(1-5\gamma/16)^2}
  \leq1-\frac{3\gamma}{8}
  \leq e^{-\gamma/8}.
\end{equation}
The penultimate inequality follows from the identity
\begin{equation}
  (1-3\gamma/8)(1-5\gamma/16)^2
  -(1-\gamma)(1+\gamma^2/16)
  =\frac{\gamma^2(552+53\gamma)}{2048}\geq0.
\end{equation}
The mean value theorem now gives
\begin{equation}
  \abs{f(x)-f(y)}\leq e^{-\gamma/8}\abs{x-y},
  \qquad x,y\in[0,1/\sqrt\delta].
\end{equation}
Since $e^{-\gamma/8}<1$, the fixed point in $[0,1/\sqrt\delta]$
is unique. Applying this estimate recursively yields
\begin{equation}
  \abs{\xi_t-\xi_\star}
  \leq e^{-\gamma t/8}\abs{\xi_0-\xi_\star}
  \leq\frac{e^{-\gamma t/8}}{\sqrt\delta}.\qedhere
\end{equation}
\end{proof}

\end{document}